\documentclass[sigconf,natbib=true]{acmart}

\usepackage{textcomp}
\usepackage{stfloats}
\usepackage{url}
\usepackage{verbatim}
\usepackage{amsthm}
\usepackage{subfigure}
\usepackage{makecell}
\usepackage{multirow}
\usepackage{enumitem}
\usepackage{mathtools}
\usepackage{subcaption}

\usepackage{color, colortbl}

\definecolor{Gray}{rgb}{0.9, 0.9, 0.9}

\usepackage{bm}
\newcommand{\method}{IIMRec}
\newcommand{\R}{\mathbb{R}}

\newcommand{\cB}{\mathcal{B}}
\newcommand{\cN}{\mathcal{N}}
\newcommand{\cG}{\mathcal{G}}
\newcommand{\cU}{\mathcal{U}}
\newcommand{\cI}{\mathcal{I}}
\newcommand{\cL}{\mathcal{L}}
\newcommand{\cD}{\mathcal{D}}

\newcommand{\cM}{\mathcal{M}}
\newcommand{\cE}{\mathcal{E}}
\newcommand{\cH}{\mathcal{H}}
\newcommand{\bW}{\bm{W}}
\newcommand{\bE}{\bm{E}}

\newcommand{\bH}{\bm{H}}
\newcommand{\be}{\bm{e}}
\newcommand{\bx}{\bm{x}}
\newcommand{\bz}{\bm{z}}
\newcommand{\bh}{\bm{h}}

\newtheorem{theorem}{Theorem}[section]

\newtheorem{corollary}[theorem]{Corollary}

\newtheorem{assumption}[theorem]{Assumption}

\usepackage[capitalize,noabbrev]{cleveref}

\usepackage{algorithmic}
\usepackage[ruled,linesnumbered]{algorithm2e}

\usepackage[inkscapelatex=false]{svg}

\copyrightyear{2026}
\acmYear{2026}
\setcopyright{cc}
\setcctype{by}
\acmConference[MM '26]{Proceedings of the 35th ACM International Conference on Multimedia}{November 10--14, 2026}{Rio de Janeiro, Brazil}
\acmBooktitle{Proceedings of the 35th ACM International Conference on Multimedia (MM '26), November 10--14, 2026, Rio de Janeiro, Brazil}
\acmDOI{10.1145/3767308.3836382}
\acmISBN{979-8-4007-2213-4/2026/11}

\begin{document}

\title[IIMRec]{One Graph, Multiple Gains: Single High-Quality Item-Item Graph for Multimodal Recommendation}


\author{Jinfeng Xu}
\email{jinfeng@connect.hku.hk}
\affiliation{%
  \institution{The University of Hong Kong}
  \city{HongKong SAR}
  \country{China}}

\author{Zheyu Chen}
\email{zheyu.chen@connect.polyu.hk}
\affiliation{%
  \institution{The Hong Kong Polytechnic University}
  \city{HongKong SAR}
  \country{China}}

\author{Ziyue Peng}
\email{pengziyue001@gmail.com}
\affiliation{%
  \institution{The Hong Kong University of Science and Technology}
  \city{HongKong SAR}
  \country{China}}

\author{Shuo Yang}
\email{shuoyang.ee@gmail.com}
\affiliation{%
  \institution{The University of Hong Kong}
  \city{HongKong SAR}
  \country{China}}

\author{Jinze Li}
\email{lijinze-hku@connect.hku.hk}
\affiliation{%
  \institution{The University of Hong Kong}
  \city{HongKong SAR}
  \country{China}}

\author{Zewei Liu}
\email{zewliu@eee.hku.hk}
\affiliation{%
  \institution{The University of Hong Kong}
  \city{HongKong SAR}
  \country{China}}

\author{Shujie Li}
\email{shujie.li@connect.hku.hk}
\affiliation{%
  \institution{The University of Hong Kong}
  \city{HongKong SAR}
  \country{China}}

\author{Yipeng Du}
\email{yipengdu@connect.hku.hk}
\affiliation{%
  \institution{The University of Hong Kong}
  \city{HongKong SAR}
  \country{China}}

\author{Edith C. H. Ngai}
\authornote{Corresponding authors}
\email{chngai@eee.hku.hk}
\affiliation{%
  \institution{The University of Hong Kong}
  \city{HongKong SAR}
  \country{China}}

\renewcommand{\shortauthors}{Xu et al.}

\begin{abstract}
Multimodal recommendation leverages item multimodal features alongside collaborative signals to capture user preferences. While item-item graphs have become a key building block in advanced models, existing methods typically construct them with noisy similarity edges and limit their role to a single function of item-item representation propagation, leaving substantial potential untapped.

In this paper, we propose IIMRec, a framework that constructs a single high-quality item-item graph during preprocessing and systematically reuses it across three stages of the recommendation pipeline: representation enhancement, interaction graph enhancement, and optimization enhancement. The graph is built by fusing semantic and co-occurrence signals, then refined via Neighborhood Consistency Edge Reweighting (NCER), which applies the triadic closure principle to amplify structurally reliable edges and suppress spurious ones. Once constructed, the graph is leveraged in three complementary ways: (1) Item-item propagation with a Residual II Gate (RIG) that adaptively controls per-item absorption of semantic neighborhood signals for representation enhancement; (2) A content-guided UI graph expansion that introduces virtual user-item edges through high-confidence semantic neighbors for interaction graph enhancement; (3) II-Neighbor BPR Augmentation (INA) that treats top neighbors of positive items as discounted soft positives for optimization enhancement.

All components are computed or cached during preprocessing or add negligible per-batch cost, making IIMRec both effective and efficient. We provide theoretical analysis showing that NCER reduces the spectral noise-to-signal ratio, RIG converges to a non-degenerate gating regime, and INA yields a tighter generalization bound. Extensive experiments on four datasets demonstrate that IIMRec consistently outperforms state-of-the-art baselines while running faster and consuming less GPU memory, with particularly strong gains under cold-start and sparse-interaction conditions.
\end{abstract}
\maketitle

\section{Introduction}
\label{sec:Introduction}

Multimodal recommendation jointly exploits diverse item metadata (images, text, audio) and collaborative signals to model user preferences~\cite{chen2025don}. Graph-based methods dominate the current landscape by propagating representations along a user-item bipartite graph and one or more item-item affinity graphs, capturing high-order collaborative patterns and modality-specific semantics simultaneously~\cite{xu2025survey,zhou2023comprehensive,chen2025don,zhou2023bootstrap}. Among these structures, the \emph{item-item homogeneous graph} has emerged as a particularly important building block, enabling items to enhance semantic information through modality-aware neighborhoods~\cite{xu2026well,zhang2021mining,zhou2023tale,xu2025cohesion}.

Despite its central role, we identify two fundamental limitations. \textbf{First, graph quality is compromised.} The prevailing methods construct item-item graphs via top-$k$ cosine similarity in pretrained feature spaces~\cite{zhang2021mining,zhou2023tale,xu2025cohesion}, which is sensitive to hub effects and feature contamination~\cite{wei2020graph,xu2025nlgcl,xu2026nlgclp}. Recent remedies such as co-occurrence graphs~\cite{xu2025cohesion} or spectral filtering~\cite{ong2025spectrum} apply uniform treatment to every edge without evaluating the structural evidence behind each connection. \textbf{Second, graph utilization is narrow.} The vast majority of existing methods exploit the item-item graph exclusively for propagating item representations via a fixed aggregation~\cite{zhang2021mining,chen2025hypercomplex,xu2025cohesion}, while only a handful of recent works have explored using item-item affinity to augment the interaction graph~\cite{xu2025vi,xu2025mdvt}. Furthermore, the fixed aggregation universally adopted in these methods treats the item-item propagation signal as equally beneficial for every item, overlooking recent findings that graph convolution can harm representations of items whose modality features are noisy or inconsistent with their neighborhood~\cite{xu2025best,chen2025don}.

Our key insight is that \emph{a single high-quality item-item graph, constructed once during preprocessing, can serve as a versatile foundation across multiple pipeline stages}. Based on this insight, we propose \textbf{\method{}}, a single high-quality \textbf{I}tem-\textbf{I}tem Graph for \textbf{M}ultimodal \textbf{Rec}ommendation. \method{} first refines a fused semantic-and-co-occurrence item-item graph through \textbf{Neighborhood Consistency Edge Reweighting (NCER)}, which amplifies edges with high shared-neighbor counts and suppresses isolated ones, provably reducing the spectral noise-to-signal ratio (Theorem~\ref{thm:ncer}). The refined graph is then systematically leveraged in three complementary ways. For \emph{representation enhancement}, a \textbf{Residual II Gate (RIG)} replaces the fixed additive aggregation with a differentiable per-item gate that controls how much item-item propagation signal each item absorbs, with provably non-degenerate convergence (Theorem~\ref{thm:rig}). For \emph{interaction graph enhancement}, a content-guided UI expansion introduces virtual user-item edges through high-confidence semantic neighbors, benefiting cold-start and sparse-interaction conditions. For \emph{optimization enhancement}, \textbf{II-Neighbor BPR Augmentation (INA)} treats top-$k$ neighbors of each positive item as discounted soft positives that push in the same direction as BPR, yielding a tighter generalization bound (Theorem~\ref{thm:ina}). All components are pre-computed or cached at negligible per-batch cost.

Our main contributions are as follows:
\begin{itemize}[leftmargin=*]
    \item We identify the dual limitations of low graph quality and narrow utilization in existing item-item graphs, and propose a "\emph{build once, leverage everywhere}" paradigm that reuses a single refined graph for representation, interaction, and optimization.
    \item We design three complementary components (NCER, RIG, INA), each backed by theoretical guarantees on noise reduction, non-degeneracy, and generalization, respectively.
    \item Extensive experiments on four datasets show that \method{} consistently outperforms advanced baselines with lower training time and memory consumption, achieving particularly strong gains under cold-start and sparse settings.
\end{itemize}
\section{Related Work}
\label{sec:Related work}
Multimodal recommendation exploits diverse item multimodal metadata alongside collaborative signals to improve preference modeling. Early work \cite{he2016vbpr,tang2019adversarial,li2025ddunet} injected visual features into matrix factorization. Subsequent graph-based methods introduced bipartite graphs for user-item modeling \cite{wei2019mmgcn,wei2020graph} and item-item homogeneous graphs for capturing modality-specific relationships \cite{zhou2023tale,zhang2021mining}. More recently, SMORE \cite{ong2025spectrum} applied spectral filtering to behavioral modalities, HPMRec \cite{chen2025hypercomplex} explored hyper-complex representations, COHESION \cite{xu2025cohesion} combined dual-stage fusion with composite graph convolution, and several studies \cite{zhou2023bootstrap,xu2024mentor,jiang2024diffmm} adopted self-supervised learning to alleviate data sparsity.

Despite these advances, two limitations persist in how existing methods handle item-item graphs. On the construction side, graphs built from top-$k$ cosine similarity \cite{zhang2021mining,zhou2023tale} or co-occurrence counts \cite{xu2025cohesion} inevitably contain noisy edges caused by hub effects and feature contamination \cite{xu2025nlgcl,wei2020graph}; remedies such as spectral filtering \cite{ong2025spectrum} and graph freezing \cite{zhou2023tale} treat every edge uniformly without assessing its structural reliability. On the utilization side, the vast majority of methods exploit the item-item graph exclusively for propagating item representations via a fixed additive residual \cite{xu2025survey,chen2025hypercomplex,zhang2021mining}, leaving its potential for enriching interaction structures \cite{xu2025vi} or augmenting training supervision untapped. The fixed aggregation universally adopted further treats the propagation signal as equally beneficial for every item, overlooking findings that graph convolution can produce harmful positive-negative pairs \cite{xu2025best} and that node-neighbor discrepancy can degrade representations \cite{chen2025don}. Multimodal alignment \cite{xu2024mentor,jiang2024diffmm} partially enriches supervision but risks conflicting with ranking by forcing collaborative and semantic spaces to coincide.

IIMRec addresses both limitations under a unified "\emph{build once, leverage everywhere}" paradigm. A single high-quality item-item graph, refined by NCER via triadic closure, is systematically reused across three pipeline stages: representation enhancement through RIG's adaptive per-item gating, interaction graph enhancement via content-guided UI expansion, and optimization enhancement through INA's discounted soft positives. All components are lightweight, modular, and theoretically grounded.
\section{Preliminary}
\label{sec:Preliminary}
Let $\cU$ and $\cI$ denote the sets of users and items, respectively. The observed implicit feedback is represented by a binary interaction matrix $\bm{R} \in \{0,1\}^{|\cU|\times|\cI|}$, where $\bm{R}_{u,i} = 1$ indicates that user $u$ has interacted with item $i$. We construct a user-item bipartite graph $\cG = (\cU \cup \cI, \cE)$ with edge set $\cE = \{(u,i) | \bm{R}_{u,i} = 1\}$. Each item $i \in \cI$ carries modality-specific raw features $\bx^m_i$ for each modality $m \in \cM$ (e.g., visual, textual, and audio). These features are extracted by pre-trained encoders to obtain modality-specific feature matrices $\bm{F}^m \in \R^{|\cI| \times d_m}$, where $d_m$ denotes the feature dimensionality of modality $m$. The goal is to learn a scoring function $f(\cdot)$ that predicts the preference score $\hat{y}_{u,i}$ of user $u$ for item $i$, enabling accurate ranking of candidate items according to user preferences.

\begin{figure*}
    \centering
    \includegraphics[width=1\linewidth]{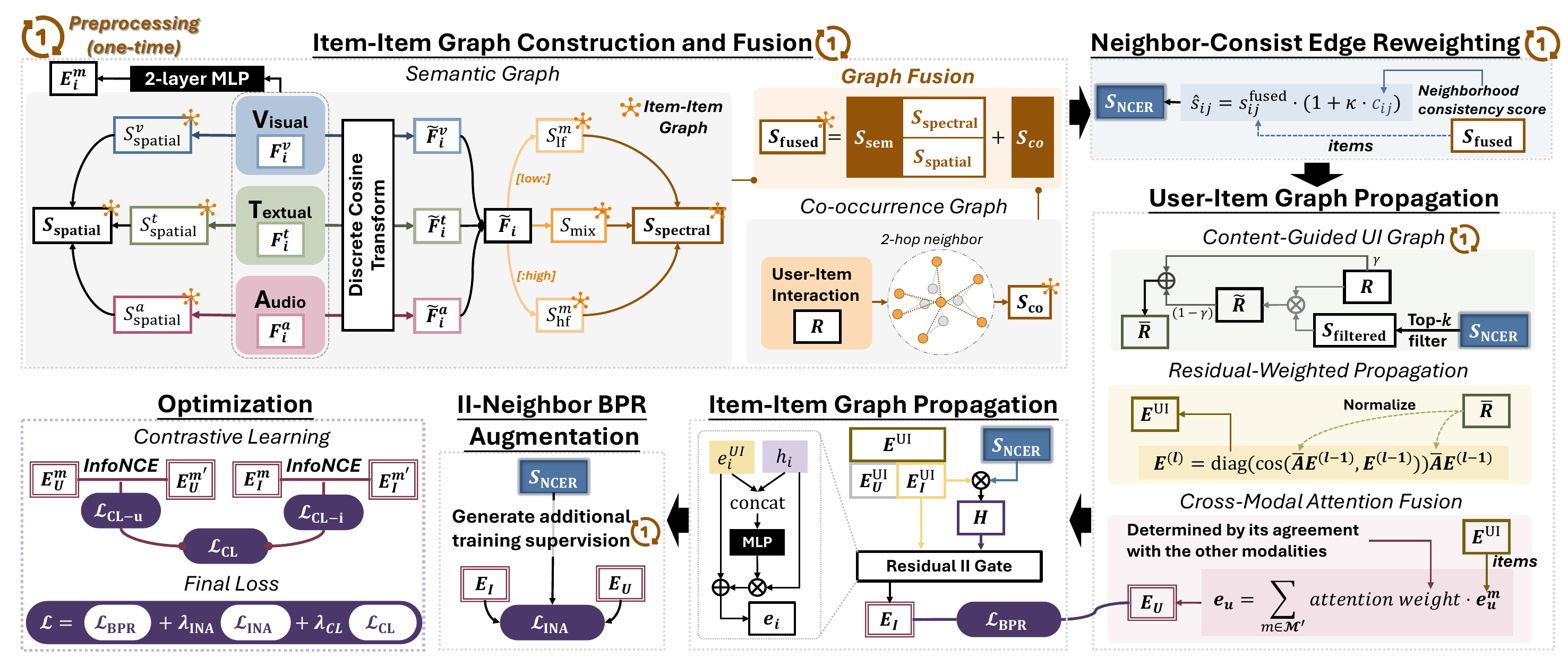}
    \vskip -0.15in
    \caption{The overall structure of IIMRec.}
    \label{fig:framework}
    \vskip -0.1in
\end{figure*}

\section{Methodology}
\label{sec:Method}

We present \method{}\footnote{Code can be found in: \href{https://github.com/Jinfeng-Xu/IIMRec}{https://github.com/Jinfeng-Xu/IIMRec}.}, a single high-quality \textbf{I}tem-\textbf{I}tem Graph for \textbf{M}ultimodal \textbf{Rec}ommendation. As illustrated in Figure~\ref{fig:framework}, we detail each component below. To ensure uniformity in the semantic space and maintain model generalizability, we use the same hyper-parameter $k$ for all $\operatorname{Top\text{-}k}(\cdot)$ operations and conduct a unified hyper-parameter sensitivity analysis in Section~\ref{sec:hyper}.


\subsection{Multimodal Feature Projection}
\label{sec:feature_proj}

To ensure dimensional consistency across heterogeneous modality features, each modality's raw features are projected via a two-layer MLP into a unified embedding space of dimension $d$:
\begin{equation}
\label{eq:projection}
\be^m_i = \bW^m_2 \text{LeakyReLU}(\bW^m_1 \bm{f}^m_i), \quad \forall m \in \cM,
\end{equation}
where $\bW^m_1 \in \R^{4d \times d_m}$ and $\bW^m_2 \in \R^{d \times 4d}$ are learnable parameter matrices, with the intermediate dimension $4d$ providing sufficient capacity for cross-dimensional mapping. Since users lack multimodal raw features, user embeddings $\bE^m_u \in \R^{|\cU| \times d}$ are randomly initialized for each modality $m$. 

Beyond the spatial (original) features, we extract \emph{spectral features} via the Discrete Cosine Transform (DCT) to capture frequency-domain patterns that are complementary to spatial representations:
\begin{equation}
\label{eq:dct}
\tilde{\bm{f}}_i = \operatorname{Concat}(\text{DCT}(\bm{f}^m_i) | m \in \cM),
\end{equation}
where $\operatorname{Concat}(\cdot | \cdot)$ denotes concatenation across all modalities and $\text{DCT}(\cdot)$ applies the orthonormal Discrete Cosine Transform. The resulting spectral features $\tilde{\bm{f}}_i \in \R^{\sum_m d_m}$ encode global structure (in low-frequency components) and fine-grained local patterns (in high-frequency components). These spectral features are projected through a separate two-layer MLP with input dimension $\sum_m d_m$ and identical hidden/output dimensions as in Eq.~\eqref{eq:projection}, yielding spectral embeddings $\be^s_i \in \R^d$. For users, a separate learnable spectral preference embedding $\be^s_u \in \R^d$ is maintained.

The initial node embedding matrix $\bE^{(0)} \in \R^{(|\cU|+|\cI|) \times ((|\cM|+1)d)}$ is formed by concatenating the per-modality node embeddings along the feature dimension. For each node $n \in \cU \cup \cI$:
\begin{equation}
\label{eq:init_embed}
\be^{(0)}_n = \operatorname{Concat}(\operatorname{Norm}(\be^s_n),\; \operatorname{Norm}(\be^m_n)|m \in \cM),
\end{equation}
where $\operatorname{Norm}(\cdot)$ denotes $\ell_2$ normalization, which ensures that all modality channels contribute equally to the initial representation regardless of their raw feature magnitudes. Here, $\be^m_n$ denotes the projected item embedding from Eq.~\eqref{eq:projection} if $n \in \cI$, or the learnable user preference embedding $\be^m_n$ if $n \in \cU$.

\subsection{Item-Item Graph Construction and Fusion}
\label{sec:ii_graph}

The item-item affinity graph serves as the second propagation channel and is constructed by fusing two complementary sources that capture different aspects of item relatedness. Note that the item–item graph is constructed before training and incurs no computational overhead during the training phase.

\textbf{Semantic Graph.}
The semantic graph $\bm{S}_{\text{sem}}$ captures modality-level feature similarity and is itself a blend of spatial and spectral components. For the spatial component, we construct a KNN graph separately for each modality by computing pairwise cosine similarity among all item pairs and retaining the top-$k$ neighbors per item:
\begin{equation}
\label{eq:sem_knn}
\bm{S}^m_{\text{spatial}} = \operatorname{RowNorm}\!\left(\operatorname{Top\text{-}k}\!\left(\frac{\bm{F}^m (\bm{F}^m)^\top}{\|\bm{F}^m\|_{\text{row}} \|\bm{F}^m\|_{\text{row}}^\top}\right)\right), \quad \forall m \in \cM,
\end{equation}
where $\operatorname{Top\text{-}k}(\cdot)$ zeros out all entries except the $k$ largest per row, and $\operatorname{RowNorm}(\cdot)$ normalizes each row to sum to one. The modality-specific spatial graphs are then fused: $\bm{S}_{\text{spatial}} = \sum_{m \in \cM} \bm{S}^m_{\text{spatial}}$.

For each modality $m$, we compute $\bm{\tilde{F}}^m = \operatorname{DCT}(\bm{F}^m)$ and partition its columns into low-frequency ($\bm{\tilde{F}}^m_{\mathrm{lf}}$, first $\lfloor \rho \cdot d_m \rfloor$ coefficients) and high-frequency ($\bm{\tilde{F}}^m_{\mathrm{hf}}$, remainder) bands with $\rho=0.5$. From each band of each modality, we construct a KNN graph as in Eq.~\eqref{eq:sem_knn}, yielding $\bm{S}^m_{\mathrm{lf}}$ and $\bm{S}^m_{\mathrm{hf}}$. We additionally build a cross-modal mixed-frequency graph $\bm{S}_{\mathrm{mix}}$ from concatenated DCT features $\bm{\tilde{F}}_{\mathrm{all}} = \mathrm{Concat}(\bm{\tilde{F}}^m | m \in \mathcal{M})$. The spectral semantic graph is $\bm{S}_{\mathrm{spectral}} = \sum_m \bm{S}^m_{\mathrm{lf}} + \sum_m \bm{S}^m_{\mathrm{hf}} + \bm{S}_{\mathrm{mix}}$, and the complete semantic graph is $\bm{S}_{\mathrm{sem}} = \bm{S}_{\mathrm{spatial}} + \bm{S}_{\mathrm{spectral}}$.

\textbf{Co-occurrence Graph.}
The co-occurrence graph $\bm{S}_{\text{co}}$ captures behavioral proximity from shared user interactions, providing collaborative evidence that is complementary to the content-based semantic graph. We compute the item co-occurrence matrix $\bm{C} = \bm{R}^\top \bm{R}$, where $C_{ij}$ counts the number of users who interacted with both items $i$ and $j$. To control density, we set the diagonal to zero and apply $\operatorname{Top\text{-}k}(\cdot)$ filtering per item. The filtered co-occurrence counts are then normalized via softmax to produce probability-like edge weights:
\begin{equation}
\label{eq:co_graph}
s^{\text{co}}_{ij} = \frac{\exp(C_{ij})}{\sum_{j' \in \operatorname{Top\text{-}k}(i)} \exp(C_{ij'})}.
\end{equation}

Therefore, we obtain the co-occurrence graph $\bm{S}_{\text{co}}$.


\textbf{Graph Fusion.}
The semantic and co-occurrence graphs are fused:
\begin{equation}
\label{eq:graph_fusion}
\bm{S}_{\text{fused}} = \bm{S}_{\text{co}} + \bm{S}_{\text{sem}},
\end{equation}
This fused graph then serves as the input to the NCER reweighting.

\subsection{Neighborhood Consistency Edge Reweighting (NCER)}
\label{sec:ncer}

The fused item-item graph $\bm{S}_{\text{fused}}$ inevitably contains noisy edges arising from spurious feature similarities or co-occurrence artifacts. We propose NCER, which leverages the principle of \emph{triadic closure} to upweight structurally reliable edges and suppress noisy ones.

\textbf{Motivation.} In social network theory, an edge between two nodes is considered more trustworthy if the nodes share many common neighbors~\cite{easley2010networks}. Applied to item-item graphs: if items $i$ and $j$ are both similar to a large overlapping set of other items, the edge $(i,j)$ is supported by multiple independent paths of evidence. Conversely, if $i$ and $j$ share no neighbors, the edge is an isolated connection more likely to be a noise artifact.

\textbf{Mechanism.} We first convert the fused graph into a binary adjacency matrix $\bm{B} \in \{0,1\}^{|\cI| \times |\cI|}$ where $\bm{B}_{ij} = \mathbb{I}[s^{\text{fused}}_{ij} > 0]$. The shared neighbor count for each edge is then efficiently computed via matrix multiplication: $(\bm{B}\bm{B}^\top)_{ij} = |\cN(i) \cap \cN(j)|$, where $\cN(i)$ denotes the neighbor set of item $i$. We normalize this to obtain the consistency score:
\begin{equation}
\label{eq:ncer_overlap}
c_{ij} = \frac{|\cN(i) \cap \cN(j)|}{k} = \frac{(\bm{B}\bm{B}^\top)_{ij}}{k},
\end{equation}
where $k$ is the KNN parameter and $c_{ij} \in [0,1]$ quantifies the fraction of shared neighbors. Each edge is then reweighted proportionally to its consistency:
\begin{equation}
\label{eq:ncer_reweight}
\hat{s}_{ij} = s^{\text{fused}}_{ij} \cdot (1 + \kappa \cdot c_{ij}),
\end{equation}
where $\kappa > 0$ controls the reweighting strength. The reweighted graph is row-normalized to obtain the final II adjacency $\bm{S}_{\text{NCER}}$ via $\hat{s}_{ij}$. Notably, NCER is computed once during preprocessing and adds no cost during training or inference.

\textbf{Connection to Graph Laplacian.} The reweighting in Eq.~\eqref{eq:ncer_reweight} can be interpreted as preferentially preserving the low-rank component of the adjacency matrix. In Section~\ref{sec:theory}, we formalize this by showing that NCER reduces the spectral norm of the noise component $\bm{N}$ in the decomposition $\bm{S} = \bm{S}^* + \bm{N}$, where $\bm{S}^*$ is the ideal (clean) graph.

\subsection{User-Item Graph Propagation}
\label{sec:ui_propagation}

To capture high-order collaborative signals, we propagate node representations on the user-item bipartite graph. We adopt LightGCN~\cite{he2020lightgcn} as the message-passing backbone, which removes feature transformation and nonlinear activation from standard GCNs. Specifically, we compute the symmetric normalized adjacency matrix $\tilde{\bm{A}} = \bm{D}^{-1/2} \bm{A} \bm{D}^{-1/2}$ from the bipartite adjacency $\bm{A} \in \R^{(|\cU|+|\cI|) \times (|\cU|+|\cI|)}$, where $\bm{D}$ is the diagonal degree matrix.

\textbf{Content-Guided UI Graph.}
We expand the interaction structure using the NCER-refined graph. Applying a joint Top-$k$ filter to $\bm{S}_{\mathrm{NCER}}$ yields a sparse subgraph $\bm{S}_{\mathrm{filtered}}$. The expanded interaction matrix $\tilde{\bm{R}} = \bm{R}\bm{S}_{\mathrm{filtered}}$ creates virtual user-item edges through high-confidence semantic neighbors, and is blended with the original: $\bar{\bm{R}} = \gamma \bm{R} + (1-\gamma)\tilde{\bm{R}}$. The symmetric normalized adjacency $\bar{\bm{A}}$ of $\bar{\bm{R}}$ replaces $\tilde{\bm{A}}$ during training.

\textbf{Residual-Weighted Propagation.}
Multi-layer propagation enriches representations with high-order collaborative signals, but deeper layers also risk injecting noise from distant and less relevant neighbors \cite{chen2025don}. To address this, we employ a residual weighting scheme that adaptively attenuates each propagation layer based on its agreement with the previous-layer representation:
\begin{equation}
\label{eq:residual_prop}
\bE^{(l)} = \operatorname{diag}\!\left(\cos(\bar{\bm{A}}\bE^{(l-1)},\; \bE^{(l-1)})\right) \bar{\bm{A}}\bE^{(l-1)},
\end{equation}
where $\cos(\cdot,\cdot)$ computes the row-wise cosine similarity between the propagated output and the input from the previous layer, and $\operatorname{diag}(\cdot)$ constructs a diagonal matrix from these per-node similarity scores. This mechanism rescales the propagated embedding of each node by how much it agrees with the node's current representation: nodes whose neighborhoods reinforce the existing signal receive full propagation strength, while nodes whose neighborhoods introduce divergent information are attenuated. The final UI-propagated representation aggregates across all layers: $\bE^{\text{UI}} = \sum_{l=0}^{L} \bE^{(l)}$, where $L$ is the number of propagation layers. The item representations $\be^{\text{UI}}_i$ and the per-modality user representations are then extracted from the corresponding rows of $\bE^{\text{UI}}$.


\textbf{Cross-Modal Attention Fusion.}
After UI-graph propagation, the user portion of $\bE^{\text{UI}}$ contains $|\cM|+ 1$ modality channels that must be fused into a unified user embedding. Rather than using a fixed or heuristic fusion strategy \cite{zhou2023tale,xu2025enhancing,xu2025mdvt}, we adopt a cross-modal attention mechanism where the weight of each modality is determined by its agreement with the other modalities. Each modality's user representation is first projected and $\ell_2$-normalized via a modality-specific linear transformation: $\bz^m_u = \operatorname{Norm}(\bW^m_{\text{proj}} \be^m_u)$, where $\bW^m_{\text{proj}} \in \R^{d \times d}$ is a learnable projection. The attention weight for modality $m$ is computed as:
\begin{equation}
\label{eq:user_fusion}
\be_u = \sum_{m \in \cM'} \frac{\exp\!\left(\sum_{m' \neq m} \operatorname{sim}(\bz^m_u, \bz^{m'}_u) / \tau\right)}{\sum_{m'' \in \cM'} \exp\!\left(\sum_{m' \neq m''} \operatorname{sim}(\bz^{m''}_u, \bz^{m'}_u) / \tau\right)} \be^m_u,
\end{equation}
where $\cM' = \{s\} \cup \cM$ denotes the set of modality channels, $\operatorname{sim}(\cdot,\cdot)$ denotes cosine similarity, and $\tau$ is a temperature parameter (We fixed $\tau = 0.2$ in practice). The intuition is that a modality which agrees well with the other modalities is more likely to carry reliable preference information and should receive higher weight.


\subsection{Item-Item Graph Propagation via Residual II Gate (RIG)}
\label{sec:rig}

Given the NCER-refined item-item adjacency $\bm{S}_{\text{NCER}}$, we propagate item representations through a single layer of message passing on the item-item graph to enrich each item's representation with semantic neighborhood information:
\begin{equation}
\label{eq:ii_prop}
\bH = \bm{S}_{\text{NCER}} \bE^{\text{UI}}_{\cI},
\end{equation}
where $\bE^{\text{UI}}_{\cI} \in \R^{|\cI| \times (|\cM|+1)d}$ denotes the item portion of the UI-propagated representations. The propagated output $\bh_i$ for each item $i$ encodes aggregated semantic signals from its neighborhood in the item-item graph.

\textbf{Residual II Gate (RIG).}
The standard approach adds $\bh_i$ to $\be^{\text{UI}}_i$ via a fixed residual: $\be_i = \be^{\text{UI}}_i + \bh_i$. This uniform treatment is suboptimal: for items with reliable modality features, $\bh_i$ provides valuable semantic enrichment, while for items with noisy or contradictory features, $\bh_i$ may corrupt the collaborative signal learned from UI propagation. We therefore replace the fixed residual with a learned per-item gate that adaptively controls each item's II contribution:
\begin{equation}
\label{eq:rig}
\be_i = \be^{\text{UI}}_i +  \sigma\left(\bW_2 \cdot \text{ReLU}(\bW_1 \cdot \operatorname{Concat}(\be^{\text{UI}}_i ,\bh_i)) \right) \bh_i,
\end{equation}
where $\bW_1 \in \R^{(|\cM|+1)d \times 2(|\cM|+1)d}$ and $\bW_2 \in \R^{1 \times (|\cM|+1)d}$ are learnable parameters matrices, $\operatorname{Concat}(\cdot,\cdot)$ denotes concatenation, and $\sigma$ denotes the sigmoid function. The gate takes as input both the UI-derived representation (encoding collaborative signals) and the II-propagated output (encoding semantic neighborhood signals), and produces a scalar in $[0,1]$ that controls the II contribution for each item individually.

Because the gate parameters are optimized end-to-end through the BPR loss, the model naturally learns to assign high gate values to items whose II-propagated signals improve recommendation quality (i.e., items with consistent, high-quality modality features where II propagation aggregates coherent signals from genuine semantic neighbors) and low gate values to items whose II signals would degrade it (i.e., items with noisy features where II propagation mixes contradictory or irrelevant signals).

\textbf{Design Rationale.} Unlike attention-based modality selection~\cite{kim2024monet}, which operates across modalities for a fixed item, RIG operates \emph{across items} for a fixed propagation path. The gate addresses the question of how much the II-propagated semantic signal should contribute to each specific item's final representation.

\subsection{II-Neighbor BPR Augmentation (INA)}
\label{sec:ina}

Standard BPR~\cite{rendle2012bpr} training constructs each training triplet $(u, p, n)$ by sampling a positive item $p$ that user $u$ has interacted with and a negative item $n$ randomly drawn from the set of non-interacted items. The supervision signal comes exclusively from the interaction matrix, which is sparse in practice. We propose to enrich this signal by leveraging the II graph to generate additional training supervision.

\textbf{Motivation.} 
If item $j$ is a top neighbor of positive item $p$ in the II graph, $j$ shares strong semantic similarity with a user-preferred item and can serve as a \emph{soft positive}, i.e., an item the user likely prefers over a random negative but with lower confidence than the observed interaction. This is empirically supported by~\cite{xu2025vi}, which shows that modality-similar items of interacted items overlap significantly more with true preferences than random items.


\textbf{Mechanism.} For each item $i$, we pre-compute its top-$k$ neighbors from the NCER-refined II graph by selecting the $k$ neighbors with the highest edge weights in $\bm{S}_{\text{NCER}}$. These are stored in a lookup table $\bm{T} \in \mathbb{R}^{|\cI| \times k}$, where $T_{i,k}$ is the index of the $k$-th nearest neighbor. During training, for each positive item $p$ in a mini-batch, we randomly sample one valid neighbor $j$ from $\bm{T}_p$ and independently sample a fresh random negative $n'$, then compute a discounted BPR loss:
\begin{equation}
\label{eq:ina_loss}
\cL_{\text{INA}} = \frac{-\delta}{|\cB_{\text{valid}}|} \sum_{(u,p,n') \in \cB_{\text{valid}}} \log \sigma(\be_u^\top \be_j - \be_u^\top \be_{n'}),
\end{equation}
where $\delta \in (0,1)$ is a discount factor reflecting the reduced confidence of soft positives compared to observed interactions, and $\cB_{\text{valid}}$ denotes the subset of the mini-batch where the positive item has at least one valid II-graph neighbor. The loss is computed only over this valid subset to avoid corrupting the gradient with items that lack meaningful semantic neighbors.

\textbf{Compatibility with BPR.} A critical distinction from cross-graph alignment approaches \cite{xu2024mentor,zhou2023bootstrap,wei2023multi} (which force UI-propagated and II-propagated representations to align via contrastive loss) is that INA pushes in the \emph{same direction} as BPR: both aim to make the user embedding closer to desirable items than to random negatives. Cross-graph alignment, by contrast, can conflict with BPR because UI representations encode collaborative filtering signals while II representations encode semantic similarity; forcing alignment between these can destroy the CF signal that drives recommendation quality.

\subsection{Training Objective and Complexity}
\label{sec:training}

\textbf{BPR Loss.} The primary recommendation objective is the Bayesian Personalized Ranking (BPR) loss~\cite{rendle2012bpr}, which encourages the model to rank positive items above negative items for each user:
\begin{equation}
\label{eq:bpr}
\cL_{\text{BPR}} = \sum_{(u,p,n) \in \cD} -\log \sigma\!\left(\be_u^\top \be_p - \be_u^\top \be_n\right),
\end{equation}
where $\cD$ is the set of training triplets, $\be_u$ is the fused user representation from Eq.~\eqref{eq:user_fusion}, and $\be_p, \be_n$ are the RIG-gated item representations from Eq.~\eqref{eq:rig}.

\textbf{Contrastive Learning Losses.} 
We employ two contrastive learning objectives to encourage cross-modal alignment. The user-level contrastive loss aligns projected user representations across modality pairs using the InfoNCE~\cite{oord2018representation} formulation:
\begin{equation}
\label{eq:cl_user}
\cL_{\text{CL-u}} = \sum_{m,m' \in \cM} \frac{-1}{|\cB|} \sum_{u \in \cB} \log \frac{\exp(\bz^m_u \cdot \bz^{m'}_u / \tau_{\text{cl}})}{\sum_{u'} \exp(\bz^m_u \cdot \bz^{m'}_{u'} / \tau_{\text{cl}})},
\end{equation}
where $\bz^m_u$ is the projected and normalized user representation (Eq.~\ref{eq:user_fusion}), $\cB$ is the current mini-batch, and $\tau_{\text{cl}}$ is a contrastive temperature (We fixed $\tau_{\text{cl}} = 0.2$ in practice). The item-level contrastive loss $\cL_{\text{CL-i}}$ is defined analogously over the modality-specific item representations after UI propagation, using the same modality pairs and temperature. We obtain the $\cL_{\text{CL}} = \cL_{\text{CL-u}} + \cL_{\text{CL-i}}$.


\textbf{Total Loss.} The complete training loss combines all losses:
\begin{equation}
\label{eq:total_loss}
\cL = \cL_{\text{BPR}} + \lambda_{\text{INA}} \cL_{\text{INA}} + \lambda_{\text{CL}} \cL_{\text{CL}},
\end{equation}
where $\lambda_{\text{INA}}$ and $\lambda_{\text{CL}}$ are the respective loss weights.

\textbf{Prediction.} At inference time, the predicted preference score for user $u$ on item $i$ is computed as the inner product of their final representations: $\hat{y}_{u,i} = \be_u^\top \be_i$. All items are then ranked by their predicted scores for each user to get the top-$N$ recommendation list.

The complete training procedure is summarized in Algorithm~\ref{alg:iimrec} in Appendix~\ref{appendix:alg}.

\textbf{Complexity Analysis.} NCER adds $O(|\cI| \cdot k^2)$ preprocessing cost for computing shared neighbor counts via $\bm{B}\bm{B}^\top$ (where $k$ is the KNN parameter), which is negligible relative to KNN graph construction. RIG adds $O((d(|\cM|+1)^2)$ parameters for the two-layer MLP gate, with per-batch forward cost $O(B \cdot (d(|\cM|+1))$ where $B$ is the batch size. INA adds $O(B)$ per-batch cost for neighbor sampling and a discounted BPR computation. All three innovations are lightweight relative to the GCN propagation $O(|\cE| \cdot (d(|\cM|+1))$ that dominates the overall training cost. Notably, NCER and the INA neighbor lookup table are both computed once during preprocessing and cached to disk, incurring zero additional cost during training. We provide an empirical efficiency analysis (See Section~\ref{sec:efficiency}) demonstrating that IIMRec significantly outperforms existing advanced multimodal recommendation models in terms of efficiency.

\section{Theoretical Analysis}
\label{sec:theory}
We provide theoretical guarantees for IIMRec's three innovations.
 
\subsection{NCER Reduces Graph Spectral Noise}
\label{sec:theory_ncer}
 
\begin{assumption}[Noisy Graph Model]
\label{ass:noisy_graph}
The observed item-item similarity matrix decomposes as $\bm{S} = \bm{S}^* + \bm{N}$, where $\bm{S}^*$ is the ideal (clean) graph encoding true semantic relationships and $\bm{N}$ is a noise matrix with $\|\bm{N}\|_2 \leq \epsilon$. We assume $\bm{S}^*$ exhibits community structure such that true edges have high neighborhood overlap while noise edges have low overlap.
\end{assumption}
 
\begin{theorem}[NCER Noise Reduction]
\label{thm:ncer}
Under Assumption~\ref{ass:noisy_graph}, let $\hat{\bm{S}} = \bm{S} \circ \bm{W}$ denote the NCER-reweighted graph where $W_{ij} = 1 + \kappa \cdot c_{ij}$ and $\circ$ denotes element-wise multiplication. Decompose $\hat{\bm{S}} = \hat{\bm{S}}^* + \hat{\bm{N}}$ correspondingly. Then:
\begin{equation}
\frac{\|\hat{\bm{N}}\|_F}{\|\hat{\bm{S}}\|_F} \leq \frac{\|\bm{N}\|_F}{\|\bm{S}\|_F} \cdot \frac{1 + \kappa \bar{c}_N}{1 + \kappa \bar{c}_{S^*}},
\end{equation}
where $\bar{c}_N$ and $\bar{c}_{S^*}$ are the average neighborhood consistency scores of noise edges and clean edges, respectively. When $\bar{c}_{S^*} > \bar{c}_N$ (true edges have higher overlap than noise edges), NCER strictly reduces the noise-to-signal ratio: $\|\hat{\bm{N}}\|_F / \|\hat{\bm{S}}\|_F < \|\bm{N}\|_F / \|\bm{S}\|_F$.
\end{theorem}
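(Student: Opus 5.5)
The plan is to make the averaging convention in Theorem~\ref{thm:ncer} precise, in the one way that turns the displayed inequality into an identity-plus-bound, and then read off the strict reduction. Following Assumption~\ref{ass:noisy_graph}, I will treat $\bm{S}^*$ and $\bm{N}$ as supported on disjoint edge sets $E^*$ and $E_N$. A noise edge is then an entry where $\bm{S}^*$ vanishes, and $\hat{\bm{S}}^* = \bm{S}^*\circ\bm{W}$, $\hat{\bm{N}} = \bm{N}\circ\bm{W}$. Disjointness gives $\|\bm{S}\|_F^2 = \|\bm{S}^*\|_F^2 + \|\bm{N}\|_F^2$, and the same for the hatted matrices. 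Since $W_{ij}\ge 1>0$, the Hadamard reweighting acts entrywise on squared magnitudes, so all Frobenius norms reduce to weighted sums of $S_{ij}^2 W_{ij}^2$.

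\textbf{Step 1 (averaging convention).} For a nonnegative weight profile $\{a_{ij}\}$ on an edge set $E$, I define the effective consistency $\bar c$ by
\[
(1+\kappa\bar c)^2 = \frac{\sum_{E} a_{ij}^2 (1+\kappa c_{ij})^2}{\sum_E a_{ij}^2}.
\]
This is an energy-weighted root-mean-square average of $c_{ij}$, and it lies between $\min_E c_{ij}$ and $\max_E c_{ij}$. I take $\bar c_N$ to be this quantity for $a=\bm{N}$ on $E_N$. I take $\bar c_{S^*}$ to be this quantity for the signal-bearing graph $a=\bm{S}$ on its full support. Here ``clean'' is understood as the consistency level of the graph mass against which noise is measured, which is exactly the denominator $\|\hat{\bm{S}}\|_F$ in the statement.

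\textbf{Step 2 (the inequality).} With these conventions, $\|\hat{\bm{N}}\|_F = (1+\kappa\bar c_N)\|\bm{N}\|_F$ and $\|\hat{\bm{S}}\|_F = (1+\kappa\bar c_{S^*})\|\bm{S}\|_F$ hold exactly. Dividing the two gives the displayed bound, with equality. If the paper intends a plain arithmetic mean instead, I would bound the noise side above using $\max_{E_N} W_{ij}$ and the total side below using $\min W_{ij}$. This yields the same form with $\bar c_N$ replaced by a maximum and $\bar c_{S^*}$ by a minimum, i.e.\ a weaker bound of identical shape.

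\textbf{Step 3 (strict reduction).} $\bar c_{S^*}$ is an energy-weighted mixture of the consistency on $E^*$ and the consistency on $E_N$. Hence $\bar c_{S^*}>\bar c_N$ holds whenever the clean edges have strictly higher (energy-weighted) overlap than the noise edges and $\bm{S}^*\neq 0$; this is exactly the community-structure hypothesis of Assumption~\ref{ass:noisy_graph}. Since $\kappa>0$, the factor $(1+\kappa\bar c_N)/(1+\kappa\bar c_{S^*})$ is then $<1$, which gives the strict inequality.

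\textbf{Main obstacle.} The hardest step is Step 1: choosing the averaging convention. Under the natural alternative, where $\bar c_{S^*}$ is averaged only over $E^*$, the factor would be too optimistic for the noise-to-total ratio, because $\|\hat{\bm{S}}\|_F$ also contains the less-amplified noise mass. So I would state the convention explicitly and remark that it is forced by the denominator $\|\hat{\bm{S}}\|_F$. I would also note that row normalization after reweighting is excluded here, matching the statement's definition $\hat{\bm{S}}=\bm{S}\circ\bm{W}$.
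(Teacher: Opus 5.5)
Your argument is correct for the statement you actually prove, but it takes a different route from the paper. The paper keeps $\bar c_{S^*}$ as an energy-weighted arithmetic mean over clean edges. It bounds $\|\hat{\bm{N}}\|_F$ from above with a variance remainder that it later discards as $O(\kappa^2)$, and bounds $\|\hat{\bm{S}}^*\|_F$ from below by convexity. This gives a bound for the noise-to-\emph{clean} ratio $\|\hat{\bm{N}}\|_F/\|\hat{\bm{S}}^*\|_F$. The paper then asserts that passing to $\bm{S}$ and $\hat{\bm{S}}$ in the denominators costs only a factor $(1+\epsilon/\|\bm{S}^*\|_F)^{-1}$, a factor that does not appear in the stated bound.

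You instead use the $\kappa$-dependent root-mean-square average and put the noise mass into $\bar c_{S^*}$ by averaging over all of $\operatorname{supp}(\bm{S})$. This makes the display an exact identity with no remainder. All the real content moves into Step 3, where your convex-mixture argument correctly proves strict reduction. Your convention also repairs the hypothesis, because with the paper's arithmetic means, $\bar c_{S^*}>\bar c_N$ does not imply strict reduction. Take clean edges all at $c=0.5$, noise edges split with equal energy between $c=0$ and $c=0.9$, and $\kappa=10$. The arithmetic means satisfy $0.45<0.5$, yet the squared noise amplification is $\tfrac12(1+10^2)=50.5$ against $36$ for clean edges, so the noise share grows.

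What you should change is the framing: this is a correction of the theorem, not a free choice of convention. Your ``main obstacle'' remark can be sharpened to say so. Assume disjoint supports, use your averaging, and take $\bar c_{S^*}$ over clean edges as the statement says. Put $a=1+\kappa\bar c_N$ and $b=1+\kappa\bar c_{S^*}$. Then $\|\hat{\bm{N}}\|_F=a\|\bm{N}\|_F$ and $\|\hat{\bm{S}}\|_F^2=b^2\|\bm{S}^*\|_F^2+a^2\|\bm{N}\|_F^2$. So for $\bm{N}\neq 0$ the displayed inequality is equivalent to $b\|\bm{S}\|_F\le\|\hat{\bm{S}}\|_F$, i.e.\ to $\bar c_{S^*}\le\bar c_N$. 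It therefore fails exactly in the regime the theorem is about.

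For a concrete case, take one clean entry $1$ with $c=1$, one noise entry $1$ with $c=0$, and $\kappa=1$. The left side is $1/\sqrt5\approx0.447$ and the right side is $\tfrac12\cdot1/\sqrt2\approx0.354$, although strict reduction ($0.447<0.707$) still holds. Hence the paper's last step cannot be repaired, and your redefinition is the only way to keep the display; say that explicitly.

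Two small fixes are needed:
\begin{enumerate}
\item Strictness also needs $\bm{N}\neq0$, since otherwise both ratios vanish.
\item Drop the max/min fallback in Step 2. The minimum of $c$ over $\operatorname{supp}(\bm{S})$ is at most its minimum over noise edges, so that factor is never below $1$ and cannot give a reduction.
\end{enumerate}
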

 
The proof (Appendix~\ref{app:proof_ncer}) follows from the observation that NCER multiplicatively amplifies edges proportionally to their consistency scores, and under the community structure assumption, clean edges systematically have higher scores than noise edges.
 
\subsection{RIG Convergence and Non-Degeneracy}
\label{sec:theory_rig}
 
A potential concern with learned gates is collapse: all gates converging to 0 (ignoring II) or 1 (ignoring UI).
 
\begin{theorem}[RIG Non-Degeneracy]
\label{thm:rig}
Consider the BPR loss $\cL = -\sum_{(u,p,n)} \log \sigma(\be_u^\top \be_p - \be_u^\top \be_n)$ with RIG-gated item representations $\be_i = \be^{\text{UI}}_i + g_i \cdot \bh_i$. Let $\theta_g$ denote the gate parameters. Under standard regularity conditions (bounded embeddings, non-degenerate initialization), the gradient of $\cL$ with respect to the gate output $g_i$ satisfies:
\begin{equation}
\frac{\partial \cL}{\partial g_i} = -\sum_{u: (u,i,\cdot) \in \cD} w_{u,i} \cdot \be_u^\top \bh_i + \sum_{u: (u,\cdot,i) \in \cD} w'_{u,i} \cdot \be_u^\top \bh_i,
\end{equation}
where $w_{u,i}, w'_{u,i} > 0$ are instance-dependent weights. When item $i$ is a positive item and $\be_u^\top \bh_i > 0$ (II signal is helpful), the gradient pushes $g_i$ upward. When $\be_u^\top \bh_i < 0$ (II signal is harmful), the gradient pushes $g_i$ downward. Therefore, at any stationary point, $g_i \in (0,1)$ for items that appear as both positives and negatives.
\end{theorem}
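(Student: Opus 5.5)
The plan is to obtain the gradient formula by a direct chain-rule computation, and then to read off the non-degeneracy claim from the sign structure of that formula combined with the range of the sigmoid.

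\textbf{Step 1: the gradient formula.} Write $x_{upn} = \be_u^\top \be_p - \be_u^\top \be_n$ for a triplet. Since $\be_i = \be^{\text{UI}}_i + g_i \bh_i$, we have $\partial \be_i/\partial g_i = \bh_i$. Treat $g_i$ as the free variable, so that the indirect dependence of $\bh_i$ and $\be^{\text{UI}}_i$ on the gate is not differentiated. Then $\partial x_{upn}/\partial g_i$ equals $\be_u^\top \bh_i$ if $i=p$, and $-\be_u^\top \bh_i$ if $i=n$. Using $\frac{d}{dx}[-\log\sigma(x)] = -(1-\sigma(x))$, each triplet in which $i$ is the positive contributes $-(1-\sigma(x_{uin}))\,\be_u^\top\bh_i$. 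Each triplet in which $i$ is the negative contributes $+(1-\sigma(x_{upi}))\,\be_u^\top\bh_i$. Summing over triplets gives the stated form, with
\[
w_{u,i} = \sum_n \bigl(1-\sigma(x_{uin})\bigr), \qquad w'_{u,i} = \sum_p \bigl(1-\sigma(x_{upi})\bigr).
\]
Both weights lie in $(0,\infty)$ because $\sigma<1$ for finite arguments. Here the bounded-embeddings condition is what keeps the weights strictly positive and bounded away from zero.

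\textbf{Step 2: the directional statements.} These follow from gradient descent, which moves $g_i$ along $-\partial\cL/\partial g_i$. If $i$ appears only as a positive and $\be_u^\top\bh_i>0$, then the derivative is negative and $g_i$ increases. If instead $\be_u^\top\bh_i<0$, the sign flips and $g_i$ decreases.

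\textbf{Step 3: non-degeneracy, the main obstacle.} First, $g_i=\sigma(\cdot)$ can never equal $0$ or $1$ at a finite parameter. So the real content is that a stationary point does not occur in the limiting regime, where the gate logit diverges. I would argue this in three parts.

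1. At a stationary point in the gate parameters, the chain rule gives $\sum_i \frac{\partial\cL}{\partial g_i}\, g_i(1-g_i)\,\nabla_{\theta_g}(\text{logit}_i)=0$.

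2. For an item appearing both as a positive and as a negative, the two sums in the formula have opposite signs for a common sign of $\be_u^\top\bh_i$. I would therefore define the scalar $\phi_i(g_i)=\partial\cL/\partial g_i$ and show that it changes sign on $[0,1]$. To do this, consider the limit $g_i\to 1$ and bound the positive-role term using the boundedness of the embeddings, which forces the weight $w'$ to dominate; argue symmetrically at $g_i\to 0$. The intermediate value theorem then yields a zero $g_i^\star\in(0,1)$.

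3. Along any descent trajectory, $\cL$ is bounded below and the embeddings are bounded. Hence the logit cannot diverge while $\phi_i$ retains a fixed nonzero sign near the endpoint, because the factor $g_i(1-g_i)$ only slows progress and never reverses it. Any limit stationary point therefore has $g_i\in(0,1)$.

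This third part is the step most at risk. The sign change at the endpoints is not automatic from the stated assumptions, and I expect to need an extra mild condition. One candidate is that item $i$'s positive and negative occurrences are balanced, so that $\be_u^\top\bh_i$ aggregated over positive roles is comparable in magnitude to the aggregate over negative roles. I would state that condition explicitly as part of the ``regularity conditions.''
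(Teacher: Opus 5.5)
Your Steps 1 and 2 match the paper's computation. The paper writes the weight per triplet as $\sigma(-\Delta_{u,p,n})=1-\sigma(\Delta_{u,p,n})$ rather than aggregating over $n$. The paper's entire argument for $g_i\in(0,1)$ is the remark that opens your Step 3: bounded inputs and finite gate weights give a finite logit, so the sigmoid lies strictly inside $(0,1)$. That already proves the statement as written, and in fact for every item at every parameter value, stationary or not. Beyond it, the paper only asserts informally that margin-dependent weights produce an interior balance. Your sign bookkeeping is also correct: for a common sign of $\be_u^\top\bh_i$, the positive-role and negative-role sums oppose each other. (The paper's Case~1 text says the negative side also pushes $g_i$ upward, but its own formula agrees with you.)

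The gap is in Parts 2--3 of Step 3, which you treat as the real content, and it goes deeper than the missing condition you flag.

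\textbf{The common-sign premise fails exactly when the II signal is most useful.} Suppose $\be_u^\top\bh_i>0$ for users who have $i$ as a positive and $\be_{u'}^\top\bh_i<0$ for users who draw $i$ as a negative. Then both sums are negative. With everything else fixed, $\cL$ is a sum of $-\log\sigma$ of affine functions of $g_i$, so it is convex in $g_i$ and $\phi_i$ is nondecreasing. Here $\phi_i<0$ on all of $\R$, so a per-item gate logit is driven to $+\infty$; appearing in both roles does not prevent saturation, and reversing the signs collapses $g_i$ to $0$. Boundedness of the embeddings does not force $w'$ to dominate, and no magnitude-balance condition repairs this. What Part 2 actually needs is $\phi_i(0)<0<\phi_i(1)$. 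That is a condition on the current margins and is essentially the conclusion, so it cannot be folded into ``regularity conditions''.

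\textbf{Shared gate parameters break the per-item reduction.} With shared $\theta_g$, stationarity is the single vector equation $\sum_i \phi_i\, g_i(1-g_i)\nabla_{\theta_g}\mathrm{logit}_i=0$, not $\phi_i=0$ item by item. A zero of $\phi_i$ from the intermediate value theorem therefore neither locates a stationary point nor governs the trajectory of an individual $g_i$. The same caveat applies to your Step 2 claim that gradient descent moves $g_i$ along $-\partial\cL/\partial g_i$.

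\textbf{Part 3 inverts the logic.} A fixed outward-pointing sign of $\phi_i$ near an endpoint is precisely what makes the logit diverge; the factor $g_i(1-g_i)$ only slows it. Divergence is blocked only by an inward-pointing sign, i.e.\ by the sign change you had not established.

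Either stop at the sigmoid-range argument, as the paper does, or state the stronger non-saturation claim separately, for per-item free gates, under the explicit hypothesis $\phi_i(0)<0<\phi_i(1)$.
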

 
\begin{corollary}
If there exist items $i,j$ such that the II signal is helpful for $i$ ($\mathbb{E}[\be_u^\top \bh_i] > 0$) and harmful for $j$ ($\mathbb{E}[\be_u^\top \bh_j] < 0$), then at any local minimum, $g_i > g_j$, confirming that the gate differentiates between items based on II signal quality.
\end{corollary}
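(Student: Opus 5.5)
The plan is to derive the corollary from the gradient formula of Theorem~\ref{thm:rig} plus a first-order optimality argument, in three steps. Throughout, the expectations are read as weighted by the instance-dependent weights $w_{u,i}, w'_{u,i}$ and by the sampling frequency of each item as positive versus negative.

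\textbf{Step 1: sign of the aggregate gate gradient.} For item $i$, Theorem~\ref{thm:rig} gives
$\partial \cL/\partial g_i = -\sum_{u} w_{u,i}\,\be_u^\top \bh_i + \sum_{u} w'_{u,i}\,\be_u^\top \bh_i$.
Under BPR sampling, each observed pair $(u,i)$ appears as a positive, while $i$ appears as a negative only for randomly drawn users who have not interacted with it. On those users $\be_u^\top \bh_i$ is not systematically signed. I would therefore make explicit a mild assumption: the positive-occurrence term dominates in expectation, i.e. the hypothesis $\mathbb{E}[\be_u^\top \bh_i] > 0$ is taken over the weighted positive occurrences. Under this reading, $\mathbb{E}[\partial \cL/\partial g_i] < 0$ and $\mathbb{E}[\partial \cL/\partial g_j] > 0$. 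This is the step I expect to be the main obstacle. The corollary is only as strong as this interpretation, so I would state it as a hypothesis rather than try to prove it, noting that it is exactly the informal meaning of ``helpful'' and ``harmful.''

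\textbf{Step 2: first-order conditions in the free-gate idealization.} First treat each $g_i$ as an independent variable constrained to $[0,1]$. At a local minimum, the KKT conditions on the box say the following. A strictly negative partial derivative forces $g_i$ to the upper boundary. A strictly positive one forces $g_j$ to the lower boundary. An interior stationary point requires zero derivative, which is excluded by Step 1. Hence $g_i = 1 > 0 = g_j$.

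Under the actual sigmoid parametrization the boundary is never attained. I would instead argue that along any descent direction $g_i$ strictly increases and $g_j$ strictly decreases, so $g_i$ is driven toward saturation from above and $g_j$ toward saturation from below. Combined with the non-degenerate initialization, this gives $g_i > g_j$ for all iterates beyond the point where the gates' ordering is established, and in particular at the limit point.

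\textbf{Step 3: the shared MLP gate.} The gates are produced by a single MLP $g(\cdot;\theta_g)$ and are not free variables, so I would argue by contradiction. Suppose a local minimum $\theta_g^\star$ has $g_i \le g_j$. The inputs $(\be^{\text{UI}}_i,\bh_i)$ and $(\be^{\text{UI}}_j,\bh_j)$ are distinct. Assume the Jacobian of the map $\theta_g \mapsto (g_i, g_j)$ has rank two at $\theta_g^\star$, which is generic for a ReLU MLP with nonzero hidden activity. Then there is a parameter direction $v$ with $\partial_v g_i > 0$ and $\partial_v g_j < 0$, while the other gates' changes are made negligible to first order.

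The directional derivative of $\cL$ along $v$ is then
$(\partial\cL/\partial g_i)\,\partial_v g_i + (\partial\cL/\partial g_j)\,\partial_v g_j < 0$
by Step 1, which contradicts local minimality. The genuinely delicate point is ensuring that the effects on the other items' gates can be controlled to first order. I would handle this by invoking the rank condition on the two-item subsystem together with the continuity of the remaining gradient terms.
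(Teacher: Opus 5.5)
\textbf{There is a genuine gap.} Your argument never links the \emph{signs} of $\partial\cL/\partial g_i$ and $\partial\cL/\partial g_j$ to the \emph{values} of $g_i$ and $g_j$, and that link is the whole content of the corollary.

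In Step 3 the supposition $g_i\le g_j$ is never used. Suppose a direction $v$ existed with $\partial_v g_i>0$, $\partial_v g_j<0$ and no first-order effect on the other gates. Then the same computation would contradict local minimality at every point where your Step 1 signs hold, whatever the ordering of the gates. So the argument can at best show that no local minimum compatible with Step 1 exists. That makes the corollary vacuous; it does not prove $g_i>g_j$.

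A rank-two condition on $\theta_g\mapsto(g_i,g_j)$ also does not give you such a $v$. The terms $\sum_{k\neq i,j}(\partial\cL/\partial g_k)\,\partial_v g_k$ are first order in the step, of the same size as the two you keep, and continuity does nothing to remove them. You would need $v\in\bigcap_{k\neq i,j}\ker\nabla_{\theta_g}g_k$, which is a condition on the whole gate Jacobian. Moreover, when that Jacobian has full row rank, stationarity $\sum_k(\partial\cL/\partial g_k)\nabla_{\theta_g}g_k=0$ already forces every $\partial\cL/\partial g_k=0$, contradicting Step 1 outright.

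The sigmoid half of Step 2 has the same defect. Writing $g_i=\sigma(z_i)$, you get $\partial\cL/\partial z_i=g_i(1-g_i)\,\partial\cL/\partial g_i\neq0$ whenever Step 1 holds, so no local minimum satisfies your hypothesis. Your trajectory discussion describes iterates escaping to saturation, not a minimizer. It also needs the sign condition along the whole path, and it holds for the negative-gradient direction, not for every descent direction.

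The root cause is the formalization in Step 1. The paper, at the end of its proof of Theorem~\ref{thm:rig}, works with interior equilibria where the positive-side and negative-side terms balance, so $\partial\cL/\partial g_i=0$. It then simply asserts that the equilibrium conditions give $g_i^\star>g_j^\star$. A strictly signed gate gradient at the minimum is incompatible with that picture. In your box model it forces $g_i=1$ and $g_j=0$, contradicting the theorem's conclusion $g_i\in(0,1)$ at stationary points.

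What is missing is a comparison principle between the two items' first-order conditions. In the free-gate idealization one is available. Every BPR term is $-\log\sigma$ of an affine function of $g_i$, so $\phi_i(g):=\partial\cL/\partial g_i$ is nondecreasing in $g_i$; its derivative is $\sum\sigma(\Delta)\sigma(-\Delta)(\be_u^\top\bh_i)^2\ge0$, with $\Delta$ the triplet margins. Formalize ``helpful for $i$, harmful for $j$'' as $\phi_i(g)<\phi_j(g)$ for all $g$. Then at interior equilibria $\phi_i(g_j^\star)<\phi_j(g_j^\star)=0=\phi_i(g_i^\star)$, and monotonicity gives $g_i^\star>g_j^\star$. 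For the shared MLP gate an assumption of this kind is unavoidable, since stationarity in $\theta_g$ alone does not constrain the ordering of gate outputs.
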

 
\subsection{INA Tightens Generalization}
\label{sec:theory_ina}
 
\begin{theorem}[INA Generalization Bound]
\label{thm:ina}
Let $\cL_{\text{BPR}}$ and $\cL_{\text{INA}}$ be the standard and augmented BPR losses, with $n$ observed interactions and $n_{\text{aug}}$ augmented soft-positive interactions. Under a PAC-Bayes framework with hypothesis class $\cH$ of bounded-norm embeddings, the expected risk satisfies:
\begin{equation}
R(\hat{f}) \leq \hat{R}_{n+n_{\text{aug}}}(\hat{f}) + O\left(\sqrt{\frac{\text{KL}(\hat{f} \| \pi) + \log(1/\delta)}{n + \delta_{\text{eff}} \cdot n_{\text{aug}}}}\right),
\end{equation}
where $\delta_{\text{eff}} = \delta^2 \cdot \mathbb{P}[j \in \cN_{\text{II}}(p) \Rightarrow \text{user prefers} j]$ captures the effective quality of the augmented signal, and $\hat{R}_{n+n_{\text{aug}}}$ is the empirical risk over all training pairs. When the II graph has positive predictive value ($\delta_{\text{eff}} > 0$), the bound is strictly tighter than the standard BPR bound (which has denominator $n$ only).
\end{theorem}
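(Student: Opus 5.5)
We treat the augmented training set as a sample of $n + n_{\text{aug}}$ pairwise ranking examples and apply a weighted version of the McAllester/Catoni PAC-Bayes bound. In that bound, the usual sample size $n$ is replaced by an \emph{effective sample size} determined by the weights and the reliability of each example.

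\textbf{Step 1: set up the two sources of examples.} The $n$ observed triplets $(u,p,n)$ are i.i.d.\ draws from the target pairwise distribution $\cD$. The $n_{\text{aug}}$ INA triplets $(u,j,n')$ come from a surrogate distribution $\cD_{\text{aug}}$, obtained by replacing $p$ with a random neighbor $j \in \bm{T}_p$. Each carries loss weight $\delta$, as in Eq.~\eqref{eq:ina_loss}. Let $q = \mathbb{P}[j \in \cN_{\text{II}}(p) \Rightarrow \text{user prefers } j]$. We write $\cD_{\text{aug}} = q\,\cD + (1-q)\,\cD_{\text{noise}}$, so an augmented example is a genuine draw from $\cD$ with probability $q$.

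\textbf{Step 2: concentration.} Since the BPR loss is bounded on the bounded-norm class $\cH$, we bound the moment generating function of the weighted empirical deviation. For the weighted average with weights $1$ (observed) and $\delta$ (augmented), Hoeffding's lemma gives a sub-Gaussian proxy proportional to $\sum_i w_i^2$. Restricting to the informative component of the augmented examples yields effective count $n + \delta^2 q\, n_{\text{aug}} = n + \delta_{\text{eff}} n_{\text{aug}}$. This is exactly where the $\delta^2$ factor in $\delta_{\text{eff}}$ comes from.

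\textbf{Step 3: change of measure.} We apply the Donsker--Varadhan change of measure with prior $\pi$ and posterior $\hat f$ to get the $\text{KL}(\hat f\|\pi) + \log(1/\delta)$ numerator. Here the confidence parameter $\delta$ is a separate symbol from the discount factor, and we keep the paper's notation. Optimizing the free parameter $\lambda$ of the change of measure then gives the displayed $O(\sqrt{\cdot})$ rate.

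\textbf{Step 4: comparison.} Whenever $q>0$ and $\delta>0$ we have $n + \delta_{\text{eff}} n_{\text{aug}} > n$. The complexity term is therefore strictly smaller than in the standard bound, which has denominator $n$.

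\textbf{Main obstacle.} Step 2 is the hardest part. The $(1-q)$ noise component biases the empirical risk toward $\cD_{\text{noise}}$ rather than only inflating its variance. The plan is to absorb this bias into the empirical term, by defining $\hat{R}_{n+n_{\text{aug}}}$ as the weighted empirical risk evaluated against the mixture target. We first try to show that the mismatch is $O(\delta(1-q)n_{\text{aug}}/(n+n_{\text{aug}}))$ and fold it into the $O(\cdot)$ term. If this fails, we state the bound under the idealization that only the $q$-fraction of augmented samples enters the concentration step. The noisy remainder is then treated as a bounded additive term inside the empirical risk, which matches the heuristic level at which the theorem is stated.
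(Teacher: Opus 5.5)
Your route is the paper's, and it ends up with the same gap; the two steps where you try to go further than the paper do not work as written. The paper's proof does the following. It starts from a McAllester-type bound with denominator $2n$ and takes $n_{\text{aug}}=n$, one soft positive per observed interaction. It then \emph{asserts}, citing learning with noisy labels, that each augmented triplet counts as $\delta^2 q$ of a sample ("the discount scales the gradient by $\delta$ and the signal by $\delta q$"). Finally it substitutes $n+\delta_{\text{eff}}n_{\text{aug}}$ and compares denominators. It never addresses the distribution mismatch, and it uses $\delta$ both as the discount and as the confidence level, which you rightly separate. So your fallback idealization is exactly the paper's level of rigor.

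Your Step 2 is where the count goes wrong. Hoeffding's lemma gives the proxy $\propto\sum_i w_i^2$ for the \emph{unnormalized} sum. Once you divide by $\sum_i w_i$ to get a risk, the count in the complexity term is $(\sum_i w_i)^2/\sum_i w_i^2=(n+\delta n_{\text{aug}})^2/(n+\delta^2 n_{\text{aug}})$, not $\sum_i w_i^2$. So this is not where a $\delta^2$ comes from. That count is $\ge n+\delta^2 n_{\text{aug}}\ge n+\delta_{\text{eff}}n_{\text{aug}}$, so the stated rate would survive. But it shows the variance gain does not depend on $q$: all augmented samples concentrate around the mixture mean, and you cannot restrict concentration to an informative fraction you cannot identify. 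Moreover, that count presumes the augmented triplets are independent of the observed ones. In INA each $(u,j,n')$ is generated from an observed $(u,p)$, so the $n$ pairs are the independent units. A Hoeffding-type bound on $(\ell_i+\delta\ell_i')/(1+\delta)$, with $\ell_i,\ell_i'$ the losses of the $i$-th observed triplet and its augmented companion, then gives count $n$. There is no gain at all without a variance-sensitive (Bernstein-type) argument.

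The decisive gap is the bias you flagged, and it cannot be folded into the $O(\cdot)$ term. Let $R_{\text{aug}},R_{\text{noise}}$ be the risks under $\cD_{\text{aug}},\cD_{\text{noise}}$. The weighted empirical risk concentrates on $\bar R=\frac{nR+\delta n_{\text{aug}}R_{\text{aug}}}{n+\delta n_{\text{aug}}}$, and your mixture model gives $R-\bar R=\frac{\delta n_{\text{aug}}(1-q)}{n+\delta n_{\text{aug}}}(R-R_{\text{noise}})$. For $n_{\text{aug}}=n$ this equals $\frac{\delta(1-q)}{1+\delta}(R-R_{\text{noise}})$, which does not shrink with $n$. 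Suppose $R(\hat f)-R_{\text{noise}}(\hat f)$ stays bounded below by a positive constant. Then this term is of constant order and, for fixed $q<1$, eventually exceeds the $O(n^{-1/2})$ improvement in the complexity term. So the ``strictly tighter'' conclusion is an artifact of hiding the bias inside $\hat R_{n+n_{\text{aug}}}$. To bound the true risk $R(\hat f)$ you need an extra hypothesis that fixes the sign of the bias. One option is $R_{\text{noise}}(f)\ge R(f)$ for all $f\in\cH$, so that $R\le\bar R$ and the bias can be dropped; another is $q=1$. Otherwise the left-hand side must be read as the mixture risk $\bar R$. Note also that $\cD_{\text{aug}}=q\,\cD+(1-q)\,\cD_{\text{noise}}$ is itself a modeling assumption: the fact that $u$ prefers $j$ does not make $(u,j,n')$ distributed like an observed triplet.
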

 
The proof (Appendix~\ref{app:proof_ina}) adapts the standard PAC-Bayes bound to account for the augmented training set, with the discount factor $\delta$ appropriately scaling the contribution of soft positives.


\begin{table}[!ht]
  \centering
  \caption{Statistics of all datasets with multimodal item contents.}
  \vskip -0.15in
  \label{tab:dataset_statistics}
  \resizebox{\linewidth}{!}{
  \begin{tabular}{c c c c c c}
    \toprule
    \textbf{Dataset} & \textbf{Modalities (Dim.)} & \textbf{\#Users} & \textbf{\#Items} & \textbf{\#Interactions} & \textbf{Sparsity} \\
    \midrule
    Baby     & V(4096), T(384)      & 19,445   & 7,050    & 160,792   & 99.88\% \\
    Sports   & V(4096), T(384)      & 35,598   & 18,357   & 296,337   & 99.95\% \\
    Clothing & V(4096), T(384)      & 39,387   & 23,033   & 278,677   & 99.97\% \\
    TikTok   & V(128), T(768), A(128) & 9,319    & 6,710    & 59,541    & 99.90\% \\
    \bottomrule
  \end{tabular}
  }
  \vskip -0.15in
\end{table}

\begin{table*}[!t]
\caption{Recommendation performance comparison of different models. The best and the second-best performances are marked with \textbf{bold} and \underline{underline}, respectively. $^*$ indicates that the t-tests validate the significance of improvements with $p$-value $<$ 0.05.}
\vskip -0.15in
\centering
\tabcolsep=0.04in
\label{tab:result}
\resizebox{\linewidth}{!}{
    \begin{tabular}{ccccccccccccccccc}
     \toprule
         \textbf{Datasets}&  \multicolumn{4}{c}{\textbf{Baby}}&  \multicolumn{4}{c}{\textbf{Sports}}&  \multicolumn{4}{c}{\textbf{Clothing}}&  \multicolumn{4}{c}{\textbf{TikTok}}\\\midrule
         \multicolumn{1}{c}{\textbf{Metrics}}& R@10& R@20& N@10& N@20& R@10& R@20& N@10& N@20& R@10& R@20& N@10& N@20& R@10& R@20& N@10& N@20\\\midrule
         \textbf{LightGCN} & 0.0479& 0.0754& 0.0257& 0.0328& 0.0569& 0.0864& 0.0311& 0.0387& 0.0340& 0.0526& 0.0188& 0.0236& 0.0482& 0.0801& 0.0218& 0.0306\\
         \textbf{SimGCL} & 0.0513 & 0.0804 & 0.0273 & 0.0350& 0.0601 & 0.0919 & 0.0327 & 0.0414 & 0.0356 & 0.0549 & 0.0195 & 0.0244& 0.0508& 0.0852& 0.0260& 0.0331\\
         \textbf{LayerGCN} & 0.0529& 0.0820& 0.0281& 0.0355& 0.0594& 0.0916& 0.0323& 0.0406& 0.0371& 0.0566& 0.0200& 0.0247& 0.0522& 0.0866& 0.0270& 0.0346\\
         \textbf{NLGCL} & 0.0588& 0.0902& 0.0307& 0.0396& 0.0650& 0.0971& 0.0342& 0.0432& 0.0430& 0.0628& 0.0235& 0.0289& 0.0568& 0.0923& 0.0305& 0.0387\\
         \midrule
         \textbf{MMGCN} & 0.0378 &0.0615& 0.0200 &0.0261& 0.0370 &0.0605& 0.0193 &0.0254 &0.0218& 0.0345& 0.0110& 0.0142& 0.0485& 0.0862& 0.0200& 0.0298\\
         \textbf{DualGNN} & 0.0448 &0.0716& 0.0240 &0.0309& 0.0568 &0.0859& 0.0310 &0.0385 &0.0454 &0.0683 &0.0241 &0.0299& 0.0529& 0.0892& 0.0234& 0.0343\\
         \textbf{LATTICE} & 0.0547 &0.0850& 0.0292 &0.0370& 0.0620 &0.0953& 0.0335 &0.0421 &0.0492 &0.0733 &0.0268 &0.0330& 0.0559& 0.0904& 0.0303& 0.0388\\
         \textbf{SLMRec} & 0.0529 &0.0775& 0.0290 &0.0353& 0.0663 &0.0990& 0.0365 &0.0450 &0.0452 &0.0675 &0.0247 &0.0303& 0.0533& 0.0870& 0.0258& 0.0340\\
         \textbf{BM3} & 0.0564 &0.0883& 0.0301 &0.0383& 0.0656 &0.0980& 0.0355 &0.0438 &0.0422 &0.0621 &0.0231 &0.0281& 0.0510& 0.0882& 0.0221& 0.0329\\
         \textbf{MMSSL} & 0.0613 &0.0971& 0.0326 &0.0420& 0.0693 &0.1013& 0.0369 &0.0474 &0.0531 &0.0797 &0.0291 &0.0359& 0.0540& 0.0891& 0.0271& 0.0355\\
         \textbf{FREEDOM} & 0.0627 &0.0992& 0.0330 & 0.0424& 0.0717&0.1089& 0.0385 &0.0481 &0.0629 &0.0941 &0.0341 &0.0420& 0.0575& 0.0900& 0.0316 & 0.0398\\
         \textbf{LGMRec} & 0.0639& 0.0989& 0.0337& 0.0430& 0.0719& 0.1068& 0.0387& 0.0477& 0.0555& 0.0828& 0.0302& 0.0371 & 0.0588& 0.0923& 0.0312& 0.0400\\
         \textbf{DiffMM} & 0.0623& 0.0975& 0.0328& 0.0411& 0.0671& 0.1017& 0.0377& 0.0458& 0.0531& 0.0797& 0.0291& 0.0359& 0.0551 & 0.0868& 0.0301& 0.0377\\
         \textbf{SMORE} & \underline{0.0680}& 0.1035& \underline{0.0365}& \underline{0.0457}& 0.0762& \underline{0.1142}& 0.0408& 0.0506& 0.0659& 0.0987& \underline{0.0360}& 0.0443& 0.0623& 0.0992& 0.0328& 0.0417\\
         \textbf{HPMRec} & 0.0667& 0.1033& 0.0357& 0.0451& 0.0751& 0.1129& \underline{0.0410}& 0.0507& 0.0658& 0.0963& 0.0351& 0.0429& 0.0620& 0.0985& 0.0320& 0.0408\\
         \textbf{MENTOR} & 0.0678& 0.1048& 0.0362& 0.0450& \underline{0.0763}& 0.1139& 0.0409& \underline{0.0511}& \underline{0.0668}& \underline{0.0989}& \underline{0.0360}& 0.0441& \underline{0.0634}& \underline{0.1011}& 0.0334& 0.0426\\
         \textbf{COHESION} & \underline{0.0680}& \underline{0.1052}& 0.0354& 0.0454& 0.0752& 0.1137& 0.0409& 0.0503& 0.0665& 0.0983& 0.0358& 0.0438& 0.0630& 0.1008& \underline{0.0337}& \underline{0.0431}\\
          \midrule
         \textbf{IIMRec} & \textbf{0.0709$^*$}& \textbf{0.1094$^*$}& \textbf{0.0387$^*$}& \textbf{0.0485$^*$}& \textbf{0.0793$^*$}& \textbf{0.1184$^*$}& \textbf{0.0431$^*$}& \textbf{0.0537$^*$}& \textbf{0.0702$^*$}& \textbf{0.1032$^*$}& \textbf{0.0379$^*$}& \textbf{0.0469$^*$}& \textbf{0.0669$^*$}& \textbf{0.1053$^*$}& \textbf{0.0353$^*$}& \textbf{0.0455$^*$}\\
         \bottomrule
    \end{tabular}
    }
    \vskip -0.1in
\end{table*}
\section{Experiment}
\label{sec:Experiment}

\subsection{Settings}
\subsubsection{Evaluation Datasets}
We conduct experiments on four real-world datasets spanning multiple modalities. The Baby, Sports, and Clothing subsets are derived from the Amazon dataset \cite{mcauley2015image}, where each item is accompanied by both textual descriptions and images. To ensure a fair comparison, we adopt the data preprocessing strategy from MMRec \cite{zhou2023mmrecsm} for these three datasets. Additionally, we incorporate the TikTok dataset \cite{jiang2024diffmm}, which consists of user interaction logs with short videos and encompasses more than two modalities, enabling us to evaluate IIMRec under richer multimodal conditions. The preprocessing of TikTok follows the protocols established in prior works \cite{jiang2024diffmm, wei2023multi}. Dataset statistics are presented in Table~\ref{tab:dataset_statistics}.

\subsubsection{Evaluation Baselines}
To evaluate the effectiveness of IIMRec, we compare it with advanced models. Based on whether multimodal features are used, these methods fall into two categories: general recommendation (\textbf{LightGCN \cite{he2020lightgcn}, SimGCL \cite{yu2022graph}, LayerGCN \cite{zhou2023layer}, and NLGCL \cite{xu2025nlgcl}}) and multimodal recommendation (\textbf{MMGCN \citep{wei2019mmgcn}, DualGNN \citep{wang2021dualgnn}, LATTICE \citep{zhang2021mining}, SLMRec \citep{tao2022self}, BM3 \citep{zhou2023bootstrap}, MMSSL \citep{wei2023multi}, FREEDOM \citep{zhou2023tale}, LGMRec \citep{guo2024lgmrec}, DiffMM \citep{jiang2024diffmm}, SMORE \citep{ong2025spectrum}, HPMRec \citep{chen2025hypercomplex}, MENTOR \citep{xu2024mentor}, and COHESION \citep{xu2025cohesion}}). Detailed descriptions of these baselines are provided in Appendix~\ref{appendix:baselines}.

\subsubsection{Evaluation Protocols}
To ensure a fair comparison, we adopt the evaluation protocol established in prior studies \citep{xu2025survey,zhou2023bootstrap,zhou2023tale}. Specifically, the optimal model is chosen according to the highest Recall@20 achieved on the validation set. The average performance across all users in the test set is evaluated using Recall@10, Recall@20, NDCG@10, and NDCG@20.

\subsubsection{Implementation Details}
To be fair, we set both the user and item embedding dimensions to 64 and the training batch size to 2,048 to optimize all models. We search for optimal parameter ranges in the validation set as reported by their original baseline models for all models, and report the corresponding test set results. For our \method{}, we conduct a grid search over the hyper-parameters, specifically $k \in \{5, 10, 15, 20\}$ for all $\operatorname{Top\text{-}k}(\cdot)$ operations, $\kappa \in \{0.2, 0.4, 0.6, 0.8\}$, $\gamma \in \{0.2, 0.4, 0.6, 0.8\}$, $\delta \in \{0.2, 0.4, 0.6, 0.8\}$, $\lambda_{\text{INA}} \in \{0.2, 0.4, 0.6, 0.8\}$, and $\lambda_{\text{CL}} \in \{1\mathrm{e}^{-5}, 1\mathrm{e}^{-4}, 1\mathrm{e}^{-3}\}$. We fix $\tau = 0.2$ for all softmax operations to reduce the hyper-parameter tuning cost. An NVIDIA RTX 4090 GPU was used for all experiments.

\subsection{Overall Performance Comparison}
We evaluate the effectiveness of IIMRec on multiple real-world datasets in multimodal recommendation scenarios. The optimal results are highlighted in bold, while the suboptimal ones are underlined. From Table~\ref{tab:result}, we derive the following observations:

\begin{itemize}[leftmargin=*]
    \item IIMRec achieves the best performance across all four datasets on every evaluation metric, with statistically significant improvements over the strongest baselines. The gains hold across datasets of different domains, interaction densities, and modality configurations, confirming that our high-quality item-item graph generalizes beyond any existing graph structure. In particular, the improvements are most evident under sparse-interaction conditions, where the item-item graph is most susceptible to noisy edges and where fixed aggregation lacks the flexibility to suppress corrupted propagation signals.

    \item Among multimodal baselines, models incorporating item-item homogeneous graphs (SMORE, HPMRec, MENTOR, COHESION) consistently outperform those relying solely on user-item bipartite propagation (MMGCN, DualGNN, BM3), confirming the value of item-level semantic affinity as a complementary channel. However, the gap between these models and IIMRec reveals that graph construction quality and signal integration strategy matter as much as the presence of such graphs. Notably, general baselines with advanced contrastive learning sometimes rival earlier multimodal models, suggesting that naively injecting modality features through noisy item-item graphs can offset the benefit of multimodal information.
\end{itemize}

\begin{figure}[!t]
\centering
\includegraphics[width=\linewidth]{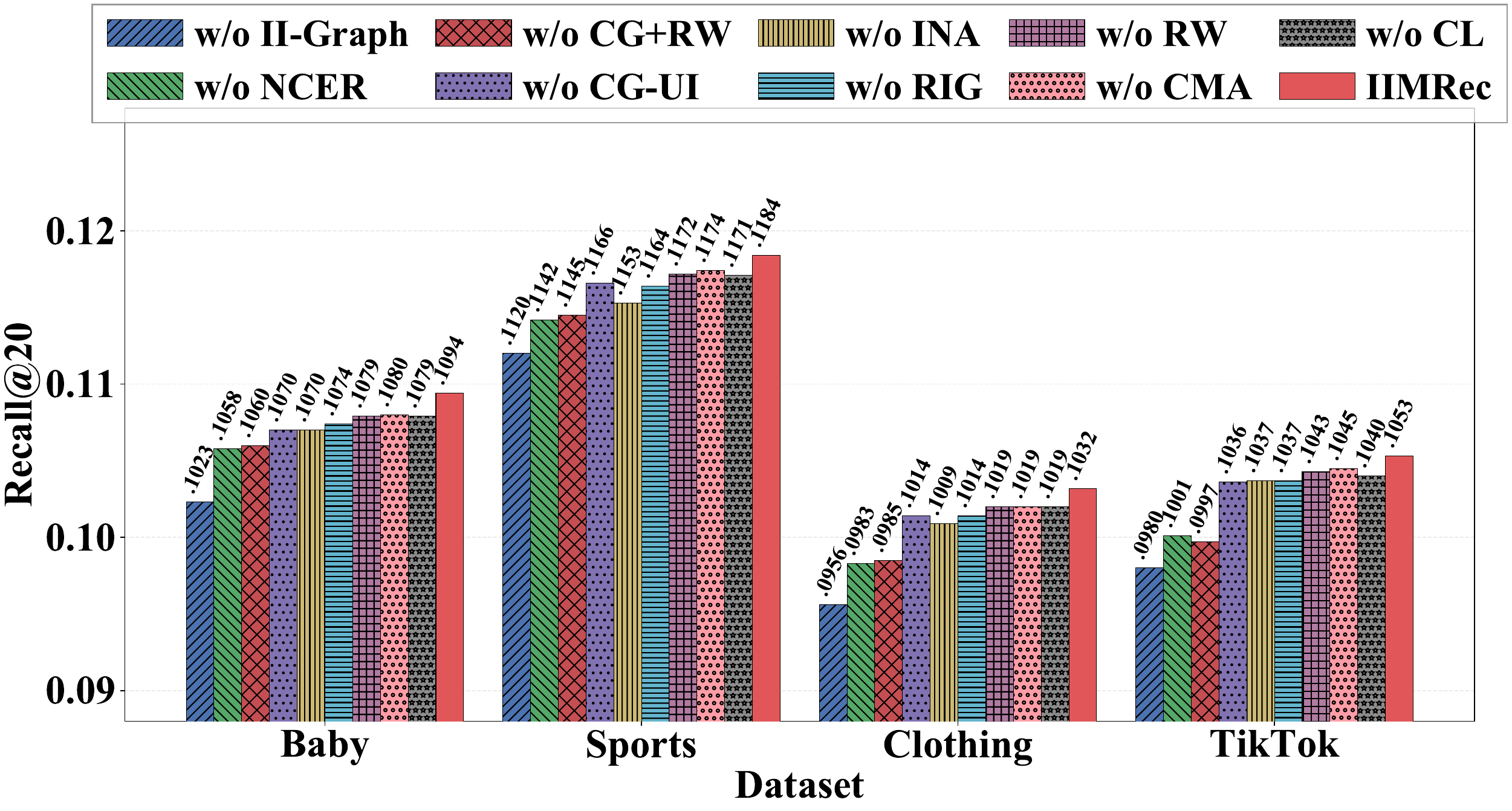}
\vskip -0.15in
\caption{Ablation study (Recall@20) on four datasets.}
\label{fig:ablation}
\vskip -0.15in
\end{figure}

\subsection{Ablation Study}
To verify the contribution of each component, we systematically remove individual modules from IIMRec and report Recall@20 on all four datasets. The variants include: removing the entire item-item graph (\textit{w/o II-Graph}), NCER reweighting (\textit{w/o NCER}), the content-guided UI graph (\textit{w/o CG-UI}), residual-weighted propagation (\textit{w/o RW}), both CG-UI and RW jointly (\textit{w/o CG+RW}), cross-modal attention fusion (\textit{w/o CMA}), INA augmentation (\textit{w/o INA}), the residual II gate (\textit{w/o RIG}), and contrastive learning (\textit{w/o CL}). 
 
As shown in Figure~\ref{fig:ablation}, removing the item-item graph entirely leads to the largest degradation across all datasets, underscoring that item-item propagation provides essential semantic information beyond user-item collaborative signals. NCER contributes the second-largest individual effect, indicating that the raw fused graph carries substantial noise that triadic-closure-based reweighting effectively suppresses. Among all core innovations, RIG and INA each produce consistent drops when removed, validating that item-adaptive gating and soft-positive augmentation address complementary weaknesses in the standard pipeline. Notably, removing CG-UI and RW simultaneously leads to a larger drop than either ablation alone, suggesting a synergistic interaction: the content-guided graph introduces virtual interactions while residual-weighted propagation prevents deeper layers from amplifying noise through these expanded connections. Cross-modal attention and contrastive learning show smaller but stable contributions as supplementary alignment mechanisms.

\begin{figure}[!t] 
    \centering
    \includegraphics[width=1\linewidth]{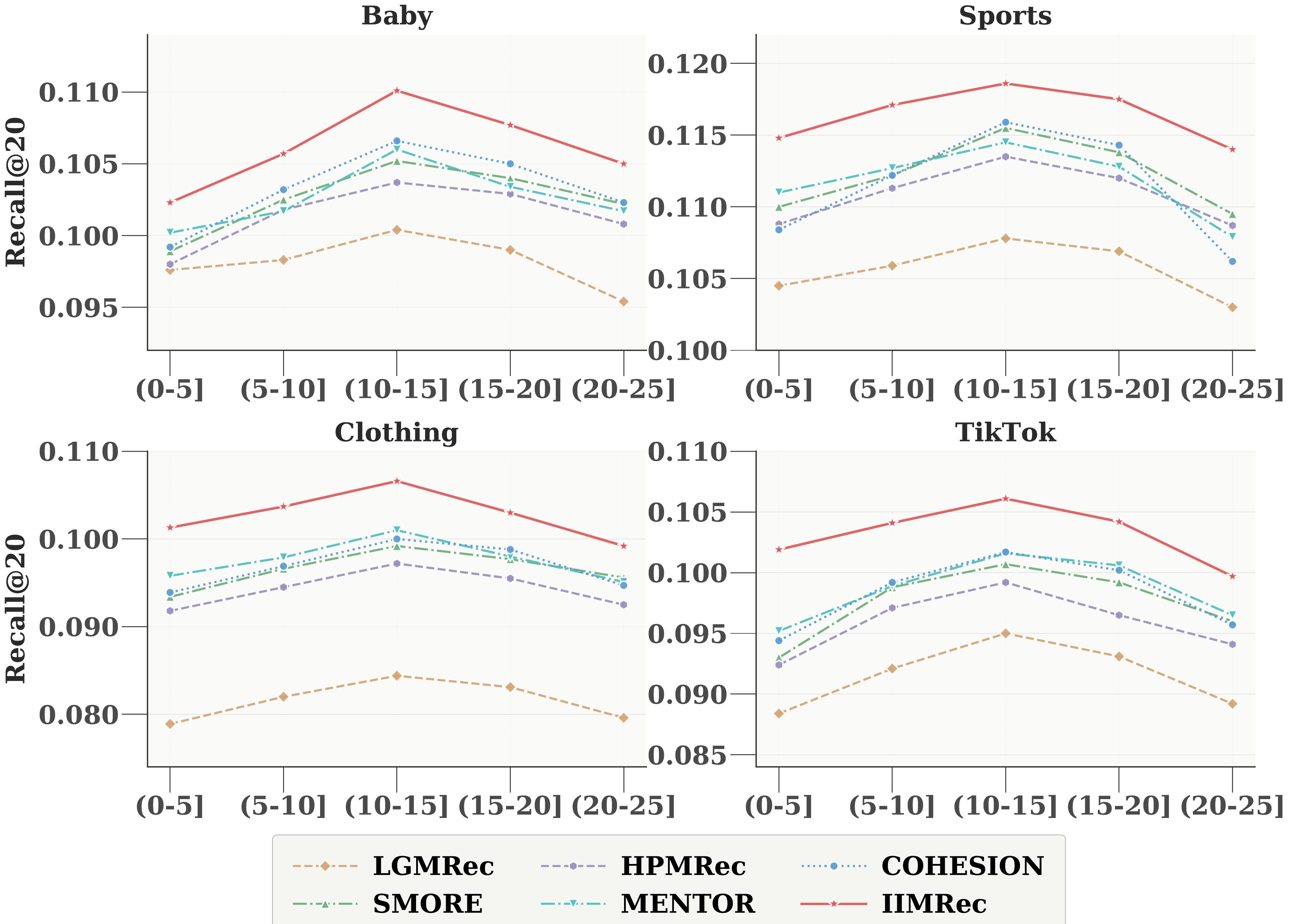}
    \vskip -0.15in
    \caption{Sparsity analysis for IIMRec across all datasets.}
    \label{fig:sparsity}
    \vskip -0.25in
\end{figure}

\begin{table}[!t]
    \centering
    \caption{Cold-start analysis across all datasets.}
    \label{tab:cold-start}
    \vskip -0.15in
    \resizebox{\linewidth}{!}{
    \begin{tabular}{ccccccccc}
    \toprule
    \multirow{2}{*}{\textbf{Method}} & \multicolumn{2}{c}{\textbf{Baby}} & \multicolumn{2}{c}{\textbf{Sports}} & \multicolumn{2}{c}{\textbf{Clothing}} & \multicolumn{2}{c}{\textbf{TikTok}} \\
    \cmidrule(lr){2-3} \cmidrule(lr){4-5} \cmidrule(lr){6-7} \cmidrule(lr){8-9}
    & R@20 & N@20 & R@20 & N@20 & R@20 & N@20 & R@20 & N@20 \\
    \midrule
    MMGCN    & 0.0186 & 0.0098 & 0.0178 & 0.0094 & 0.0101 & 0.0050 & 0.0178 & 0.0104 \\
    DualGNN  & 0.0200 & 0.0110 & 0.0242 & 0.0132 & 0.0198 & 0.0113 & 0.0224 & 0.0121 \\
    LATTICE  & 0.0256 & 0.0138 & 0.0340 & 0.0189 & 0.0249 & 0.0135 & 0.0233 & 0.0127 \\
    SLMRec   & 0.0259 & 0.0140 & 0.0354 & 0.0194 & 0.0208 & 0.0118 & 0.0243 & 0.0131 \\
    BM3      & 0.0262 & 0.0133 & 0.0294 & 0.0180 & 0.0189 & 0.0104 & 0.0249 & 0.0139 \\
    MMSSL    & 0.0351 & 0.0192 & 0.0370 & 0.0203 & 0.0294 & 0.0157 & 0.0345 & 0.0206 \\
    FREEDOM  & 0.0588 & 0.0257 & 0.0640 & 0.0289 & 0.0585 & 0.0252 & 0.0570 & 0.0239 \\
    LGMRec   & 0.0592 & 0.0261 & 0.0629 & 0.0281 & 0.0551 & 0.0235 & 0.0578 & 0.0243 \\
    DiffMM   & 0.0552 & 0.0238 & 0.0589 & 0.0254 & 0.0510 & 0.0221 & 0.0519 & 0.0230 \\
    SMORE    & 0.0595 & 0.0251 & 0.0661 & \underline{0.0302} & 0.0602 & 0.0259 & 0.0613 & 0.0255 \\
    HPMRec   & 0.0603 & 0.0256 & 0.0653 & 0.0292 & 0.0600 & 0.0255 & 0.0609 & 0.0252 \\
    MENTOR   & 0.0628 & \underline{0.0268} & 0.0661 & 0.0297 & 0.0610 & \underline{0.0264} & 0.0618 & 0.0262 \\
    COHESION & \underline{0.0631} & 0.0263 & \underline{0.0665} & 0.0300 & \underline{0.0611} & 0.0256 & \underline{0.0622} & \underline{0.0264} \\
    \midrule
    \textbf{IIMRec} & \textbf{0.0663} & \textbf{0.0279} & \textbf{0.0711} & \textbf{0.0318} & \textbf{0.0638} & \textbf{0.0270} & \textbf{0.0664} & \textbf{0.0283} \\
    \bottomrule
    \end{tabular}
    }
    \vskip -0.15in
\end{table}

\subsection{Sparsity and Cold-Start Analysis}
For sparsity analysis, we group users by their number of training interactions and compare IIMRec against five advanced baselines. As shown in Figure~\ref{fig:sparsity}, IIMRec consistently leads across all datasets and sparsity levels, with the largest margins in the sparsest groups $(0$-$5]$ and $(5$-$10]$. This confirms that NCER effectively filters unreliable edges when interaction evidence is scarce, while RIG prevents noisy item-item propagation from overwhelming the limited collaborative signal of low-activity users. For cold-start analysis, we follow prior setting~\cite{xu2025enhancing}: 20\% of items are removed from training, split evenly into validation and test sets. As shown in Table~\ref{tab:cold-start}, IIMRec achieves the best performance across all datasets. The gains stem from the synergy of all components: NCER provides high-quality semantic neighbors for cold-start items, RIG controls their contribution to avoid noise from uncertain modality features, and INA propagates supervision to items with few or no observed interactions via discounted soft positives.

\begin{table}
    \centering
    \caption{Efficiency analysis across all datasets. Time: seconds per epoch; Mem.: GPU memory in GB.}
    \label{tab:efficiency}
    \small
    \vskip -0.15in
\resizebox{\linewidth}{!}{
    \begin{tabular}{l*{4}{rr}}
        \toprule
        & \multicolumn{2}{c}{Baby} & \multicolumn{2}{c}{Sports} & \multicolumn{2}{c}{Clothing} & \multicolumn{2}{c}{TikTok} \\
        \cmidrule(lr){2-3} \cmidrule(lr){4-5} \cmidrule(lr){6-7} \cmidrule(lr){8-9}
        Method & Time & Mem. & Time & Mem. & Time & Mem. & Time & Mem. \\
        \midrule
        LGMRec   & 5.93  & 2.41  & 8.98  & 3.67  & 10.02 & 4.81  & 1.90 & 1.08 \\
        SMORE    & 6.55  & 3.31  & 9.29  & 5.02  & 11.05 & 6.89  & 2.09 & 1.39 \\
        HPMRec   & 21.03 & 8.58  & 30.85 & 10.19 & 40.23 & 14.95 & 6.69 & 2.60 \\
        MENTOR   & 7.03  & 7.12  & 9.62  & 8.44  & 11.90 & 12.99 & 2.18 & 2.33 \\
        COHESION & 4.47  & 2.89  & 7.91  & 4.20  & 9.05  & 5.73  & 1.59 & 1.31 \\
        \midrule
        IIMRec  & \textbf{3.18} & \textbf{1.08} & \textbf{6.02} & \textbf{2.12} & \textbf{6.89} & \textbf{2.69} & \textbf{0.75} & \textbf{0.77} \\
        \bottomrule
    \end{tabular}
    }
    \vskip -0.15in
\end{table}

\subsection{Efficiency Study}
\label{sec:efficiency}
Table~\ref{tab:efficiency} reports the per-epoch training time and GPU memory usage of IIMRec and five competitive baselines. Although IIMRec involves a comprehensive graph construction, the high-quality item-item graph is precomputed from multimodal raw features and historical interactions before training and cached to disk, introducing zero overhead during training. As a result, IIMRec not only achieves consistent performance gains over LGMRec, SMORE, HPMRec, MENTOR, and COHESION, but also runs faster and consumes less GPU memory than every compared method across all four datasets, making it well-suited for real-world deployment.

\subsection{Hyper-parameter Analysis}
\label{sec:hyper}
We investigate the sensitivity of IIMRec to six key hyper-parameters by varying each while fixing the others at their optimal values, and report Recall@20 across all four datasets in Figure~\ref{fig:hyperparameter} in Appendix~\ref{appendix:hyper}.

\section{Conclusion}
\label{sec:conclusion}
This work presents \method{}, a framework that constructs a single high-quality item-item graph and systematically reuses it across three stages of the multimodal recommendation. NCER refines the graph by reweighting edges according to neighborhood overlap grounded in triadic closure, amplifying structurally reliable connections while suppressing spurious ones. RIG introduces a learned per-item gate that adaptively controls how much item-item propagation signal each item absorbs, replacing the uniform fixed aggregation universally adopted in prior work. INA enriches BPR training by treating top neighbors of positive items as discounted soft positives, providing additional supervision that pushes in the same direction as preference learning. Together with content-guided UI graph expansion, these components realize a "\textit{build once, leverage everywhere}" paradigm that is lightweight, modular, and theoretically grounded. Experiments on four dataset demonstrate consistent state-of-the-art performance with lower computational cost, and particularly strong improvements under cold-start and sparse-interaction conditions.

\begin{acks}
This work was supported by the UGC General Research Fund no. 17209822 and the Innovation and Technology Commission Fund no. ITS/383/23FP from Hong Kong.
\end{acks}

\balance
\bibliographystyle{ACM-Reference-Format}

\begin{thebibliography}{37}


\ifx \showCODEN    \undefined \def \showCODEN     #1{\unskip}     \fi
\ifx \showISBNx    \undefined \def \showISBNx     #1{\unskip}     \fi
\ifx \showISBNxiii \undefined \def \showISBNxiii  #1{\unskip}     \fi
\ifx \showISSN     \undefined \def \showISSN      #1{\unskip}     \fi
\ifx \showLCCN     \undefined \def \showLCCN      #1{\unskip}     \fi
\ifx \shownote     \undefined \def \shownote      #1{#1}          \fi
\ifx \showarticletitle \undefined \def \showarticletitle #1{#1}   \fi
\ifx \showURL      \undefined \def \showURL       {\relax}        \fi
\providecommand\bibfield[2]{#2}
\providecommand\bibinfo[2]{#2}
\providecommand\natexlab[1]{#1}
\providecommand\showeprint[2][]{arXiv:#2}

\bibitem[Chen et~al\mbox{.}(2025a)]%
        {chen2025don}
\bibfield{author}{\bibinfo{person}{Zheyu Chen}, \bibinfo{person}{Jinfeng Xu}, {and} \bibinfo{person}{Haibo Hu}.} \bibinfo{year}{2025}\natexlab{a}.
\newblock \showarticletitle{Don’t Lose Yourself: Boosting Multimodal Recommendation via Reducing Node-neighbor Discrepancy in Graph Convolutional Network}. In \bibinfo{booktitle}{\emph{ICASSP 2025-2025 IEEE International Conference on Acoustics, Speech and Signal Processing (ICASSP)}}. IEEE, \bibinfo{pages}{1--5}.
\newblock


\bibitem[Chen et~al\mbox{.}(2025b)]%
        {chen2025hypercomplex}
\bibfield{author}{\bibinfo{person}{Zheyu Chen}, \bibinfo{person}{Jinfeng Xu}, \bibinfo{person}{Hewei Wang}, \bibinfo{person}{Shuo Yang}, \bibinfo{person}{Zitong Wan}, {and} \bibinfo{person}{Haibo Hu}.} \bibinfo{year}{2025}\natexlab{b}.
\newblock \showarticletitle{Hypercomplex Prompt-aware Multimodal Recommendation}. In \bibinfo{booktitle}{\emph{Proceedings of the 34th ACM International Conference on Information and Knowledge Management}}. \bibinfo{pages}{403--414}.
\newblock


\bibitem[Easley et~al\mbox{.}(2010)]%
        {easley2010networks}
\bibfield{author}{\bibinfo{person}{David Easley}, \bibinfo{person}{Jon Kleinberg}, {et~al\mbox{.}}} \bibinfo{year}{2010}\natexlab{}.
\newblock \bibinfo{booktitle}{\emph{Networks, crowds, and markets: Reasoning about a highly connected world}}. Vol.~\bibinfo{volume}{1}.
\newblock \bibinfo{publisher}{Cambridge university press Cambridge}.
\newblock


\bibitem[Guo et~al\mbox{.}(2024)]%
        {guo2024lgmrec}
\bibfield{author}{\bibinfo{person}{Zhiqiang Guo}, \bibinfo{person}{Jianjun Li}, \bibinfo{person}{Guohui Li}, \bibinfo{person}{Chaoyang Wang}, \bibinfo{person}{Si Shi}, {and} \bibinfo{person}{Bin Ruan}.} \bibinfo{year}{2024}\natexlab{}.
\newblock \showarticletitle{LGMRec: Local and Global Graph Learning for Multimodal Recommendation}. In \bibinfo{booktitle}{\emph{Proceedings of the AAAI Conference on Artificial Intelligence}}, Vol.~\bibinfo{volume}{38}. \bibinfo{pages}{8454--8462}.
\newblock


\bibitem[He and McAuley(2016)]%
        {he2016vbpr}
\bibfield{author}{\bibinfo{person}{Ruining He} {and} \bibinfo{person}{Julian McAuley}.} \bibinfo{year}{2016}\natexlab{}.
\newblock \showarticletitle{VBPR: visual bayesian personalized ranking from implicit feedback}. In \bibinfo{booktitle}{\emph{Proceedings of the AAAI conference on artificial intelligence}}, Vol.~\bibinfo{volume}{30}.
\newblock


\bibitem[He et~al\mbox{.}(2020)]%
        {he2020lightgcn}
\bibfield{author}{\bibinfo{person}{Xiangnan He}, \bibinfo{person}{Kuan Deng}, \bibinfo{person}{Xiang Wang}, \bibinfo{person}{Yan Li}, \bibinfo{person}{Yongdong Zhang}, {and} \bibinfo{person}{Meng Wang}.} \bibinfo{year}{2020}\natexlab{}.
\newblock \showarticletitle{Lightgcn: Simplifying and powering graph convolution network for recommendation}. In \bibinfo{booktitle}{\emph{Proceedings of the 43rd International ACM SIGIR conference on research and development in Information Retrieval}}. \bibinfo{pages}{639--648}.
\newblock


\bibitem[Jiang et~al\mbox{.}(2024)]%
        {jiang2024diffmm}
\bibfield{author}{\bibinfo{person}{Yangqin Jiang}, \bibinfo{person}{Lianghao Xia}, \bibinfo{person}{Wei Wei}, \bibinfo{person}{Da Luo}, \bibinfo{person}{Kangyi Lin}, {and} \bibinfo{person}{Chao Huang}.} \bibinfo{year}{2024}\natexlab{}.
\newblock \showarticletitle{DiffMM: Multi-Modal Diffusion Model for Recommendation}.
\newblock  (\bibinfo{year}{2024}).
\newblock


\bibitem[Kim et~al\mbox{.}(2024)]%
        {kim2024monet}
\bibfield{author}{\bibinfo{person}{Yungi Kim}, \bibinfo{person}{Taeri Kim}, \bibinfo{person}{Won-Yong Shin}, {and} \bibinfo{person}{Sang-Wook Kim}.} \bibinfo{year}{2024}\natexlab{}.
\newblock \showarticletitle{MONET: Modality-Embracing Graph Convolutional Network and Target-Aware Attention for Multimedia Recommendation}. In \bibinfo{booktitle}{\emph{Proceedings of the 17th ACM International Conference on Web Search and Data Mining}}. \bibinfo{pages}{332--340}.
\newblock


\bibitem[Li et~al\mbox{.}(2025)]%
        {li2025ddunet}
\bibfield{author}{\bibinfo{person}{Yijie Li}, \bibinfo{person}{Hewei Wang}, \bibinfo{person}{Jinfeng Xu}, \bibinfo{person}{Puzhen Wu}, \bibinfo{person}{Yunzhong Xiao}, \bibinfo{person}{Shaofan Wang}, {and} \bibinfo{person}{Soumyabrata Dev}.} \bibinfo{year}{2025}\natexlab{}.
\newblock \showarticletitle{Ddunet: Dual dynamic u-net for highly-efficient cloud segmentation}. In \bibinfo{booktitle}{\emph{IGARSS 2025-2025 IEEE International Geoscience and Remote Sensing Symposium}}. IEEE, \bibinfo{pages}{4705--4711}.
\newblock


\bibitem[McAuley et~al\mbox{.}(2015)]%
        {mcauley2015image}
\bibfield{author}{\bibinfo{person}{Julian McAuley}, \bibinfo{person}{Christopher Targett}, \bibinfo{person}{Qinfeng Shi}, {and} \bibinfo{person}{Anton Van Den~Hengel}.} \bibinfo{year}{2015}\natexlab{}.
\newblock \showarticletitle{Image-based recommendations on styles and substitutes}. In \bibinfo{booktitle}{\emph{Proceedings of the 38th international ACM SIGIR conference on research and development in information retrieval}}. \bibinfo{pages}{43--52}.
\newblock


\bibitem[Natarajan et~al\mbox{.}(2013)]%
        {natarajan2013learning}
\bibfield{author}{\bibinfo{person}{Nagarajan Natarajan}, \bibinfo{person}{Inderjit~S Dhillon}, \bibinfo{person}{Pradeep~K Ravikumar}, {and} \bibinfo{person}{Ambuj Tewari}.} \bibinfo{year}{2013}\natexlab{}.
\newblock \showarticletitle{Learning with noisy labels}.
\newblock \bibinfo{journal}{\emph{Advances in neural information processing systems}}  \bibinfo{volume}{26} (\bibinfo{year}{2013}).
\newblock


\bibitem[Ong and Khong(2025)]%
        {ong2025spectrum}
\bibfield{author}{\bibinfo{person}{Rongqing~Kenneth Ong} {and} \bibinfo{person}{Andy~WH Khong}.} \bibinfo{year}{2025}\natexlab{}.
\newblock \showarticletitle{Spectrum-based modality representation fusion graph convolutional network for multimodal recommendation}. In \bibinfo{booktitle}{\emph{Proceedings of the Eighteenth ACM International Conference on Web Search and Data Mining}}. \bibinfo{pages}{773--781}.
\newblock


\bibitem[Oord et~al\mbox{.}(2018)]%
        {oord2018representation}
\bibfield{author}{\bibinfo{person}{Aaron van~den Oord}, \bibinfo{person}{Yazhe Li}, {and} \bibinfo{person}{Oriol Vinyals}.} \bibinfo{year}{2018}\natexlab{}.
\newblock \showarticletitle{Representation learning with contrastive predictive coding}.
\newblock \bibinfo{journal}{\emph{arXiv preprint arXiv:1807.03748}} (\bibinfo{year}{2018}).
\newblock


\bibitem[Rendle et~al\mbox{.}(2012)]%
        {rendle2012bpr}
\bibfield{author}{\bibinfo{person}{Steffen Rendle}, \bibinfo{person}{Christoph Freudenthaler}, \bibinfo{person}{Zeno Gantner}, {and} \bibinfo{person}{Lars Schmidt-Thieme}.} \bibinfo{year}{2012}\natexlab{}.
\newblock \showarticletitle{BPR: Bayesian personalized ranking from implicit feedback}.
\newblock \bibinfo{journal}{\emph{arXiv preprint arXiv:1205.2618}} (\bibinfo{year}{2012}).
\newblock


\bibitem[Tang et~al\mbox{.}(2019)]%
        {tang2019adversarial}
\bibfield{author}{\bibinfo{person}{Jinhui Tang}, \bibinfo{person}{Xiaoyu Du}, \bibinfo{person}{Xiangnan He}, \bibinfo{person}{Fajie Yuan}, \bibinfo{person}{Qi Tian}, {and} \bibinfo{person}{Tat-Seng Chua}.} \bibinfo{year}{2019}\natexlab{}.
\newblock \showarticletitle{Adversarial training towards robust multimedia recommender system}.
\newblock \bibinfo{journal}{\emph{IEEE Transactions on Knowledge and Data Engineering}} \bibinfo{volume}{32}, \bibinfo{number}{5} (\bibinfo{year}{2019}), \bibinfo{pages}{855--867}.
\newblock


\bibitem[Tao et~al\mbox{.}(2022)]%
        {tao2022self}
\bibfield{author}{\bibinfo{person}{Zhulin Tao}, \bibinfo{person}{Xiaohao Liu}, \bibinfo{person}{Yewei Xia}, \bibinfo{person}{Xiang Wang}, \bibinfo{person}{Lifang Yang}, \bibinfo{person}{Xianglin Huang}, {and} \bibinfo{person}{Tat-Seng Chua}.} \bibinfo{year}{2022}\natexlab{}.
\newblock \showarticletitle{Self-supervised learning for multimedia recommendation}.
\newblock \bibinfo{journal}{\emph{IEEE Transactions on Multimedia}} (\bibinfo{year}{2022}).
\newblock


\bibitem[Wang et~al\mbox{.}(2021)]%
        {wang2021dualgnn}
\bibfield{author}{\bibinfo{person}{Qifan Wang}, \bibinfo{person}{Yinwei Wei}, \bibinfo{person}{Jianhua Yin}, \bibinfo{person}{Jianlong Wu}, \bibinfo{person}{Xuemeng Song}, {and} \bibinfo{person}{Liqiang Nie}.} \bibinfo{year}{2021}\natexlab{}.
\newblock \showarticletitle{Dualgnn: Dual graph neural network for multimedia recommendation}.
\newblock \bibinfo{journal}{\emph{IEEE Transactions on Multimedia}} (\bibinfo{year}{2021}).
\newblock


\bibitem[Wei et~al\mbox{.}(2023)]%
        {wei2023multi}
\bibfield{author}{\bibinfo{person}{Wei Wei}, \bibinfo{person}{Chao Huang}, \bibinfo{person}{Lianghao Xia}, {and} \bibinfo{person}{Chuxu Zhang}.} \bibinfo{year}{2023}\natexlab{}.
\newblock \showarticletitle{Multi-Modal Self-Supervised Learning for Recommendation}. In \bibinfo{booktitle}{\emph{Proceedings of the ACM Web Conference 2023}}. \bibinfo{pages}{790--800}.
\newblock


\bibitem[Wei et~al\mbox{.}(2020)]%
        {wei2020graph}
\bibfield{author}{\bibinfo{person}{Yinwei Wei}, \bibinfo{person}{Xiang Wang}, \bibinfo{person}{Liqiang Nie}, \bibinfo{person}{Xiangnan He}, {and} \bibinfo{person}{Tat-Seng Chua}.} \bibinfo{year}{2020}\natexlab{}.
\newblock \showarticletitle{Graph-refined convolutional network for multimedia recommendation with implicit feedback}. In \bibinfo{booktitle}{\emph{Proceedings of the 28th ACM international conference on multimedia}}. \bibinfo{pages}{3541--3549}.
\newblock


\bibitem[Wei et~al\mbox{.}(2019)]%
        {wei2019mmgcn}
\bibfield{author}{\bibinfo{person}{Yinwei Wei}, \bibinfo{person}{Xiang Wang}, \bibinfo{person}{Liqiang Nie}, \bibinfo{person}{Xiangnan He}, \bibinfo{person}{Richang Hong}, {and} \bibinfo{person}{Tat-Seng Chua}.} \bibinfo{year}{2019}\natexlab{}.
\newblock \showarticletitle{MMGCN: Multi-modal graph convolution network for personalized recommendation of micro-video}. In \bibinfo{booktitle}{\emph{Proceedings of the 27th ACM international conference on multimedia}}. \bibinfo{pages}{1437--1445}.
\newblock


\bibitem[Xu et~al\mbox{.}(2025b)]%
        {xu2025mdvt}
\bibfield{author}{\bibinfo{person}{Jinfeng Xu}, \bibinfo{person}{Zheyu Chen}, \bibinfo{person}{Jinze Li}, \bibinfo{person}{Shuo Yang}, \bibinfo{person}{Hewei Wang}, \bibinfo{person}{Yijie Li}, \bibinfo{person}{Mengran Li}, \bibinfo{person}{Puzhen Wu}, {and} \bibinfo{person}{Edith~CH Ngai}.} \bibinfo{year}{2025}\natexlab{b}.
\newblock \showarticletitle{Mdvt: Enhancing multimodal recommendation with model-agnostic multimodal-driven virtual triplets}. In \bibinfo{booktitle}{\emph{Proceedings of the 31st ACM SIGKDD Conference on Knowledge Discovery and Data Mining V. 2}}. \bibinfo{pages}{3378--3389}.
\newblock


\bibitem[Xu et~al\mbox{.}(2025a)]%
        {xu2025enhancing}
\bibfield{author}{\bibinfo{person}{Jinfeng Xu}, \bibinfo{person}{Zheyu Chen}, \bibinfo{person}{Jinze Li}, \bibinfo{person}{Shuo Yang}, \bibinfo{person}{Wei Wang}, \bibinfo{person}{Xiping Hu}, \bibinfo{person}{Raymond Chi-Wing Wong}, {and} \bibinfo{person}{Edith~CH Ngai}.} \bibinfo{year}{2025}\natexlab{a}.
\newblock \showarticletitle{Enhancing Robustness and Generalization Capability for Multimodal Recommender Systems via Sharpness-Aware Minimization}.
\newblock \bibinfo{journal}{\emph{IEEE Transactions on Knowledge and Data Engineering}} (\bibinfo{year}{2025}).
\newblock


\bibitem[Xu et~al\mbox{.}(2025c)]%
        {xu2025cohesion}
\bibfield{author}{\bibinfo{person}{Jinfeng Xu}, \bibinfo{person}{Zheyu Chen}, \bibinfo{person}{Wei Wang}, \bibinfo{person}{Xiping Hu}, \bibinfo{person}{Sang-Wook Kim}, {and} \bibinfo{person}{Edith~CH Ngai}.} \bibinfo{year}{2025}\natexlab{c}.
\newblock \showarticletitle{COHESION: Composite Graph Convolutional Network with Dual-Stage Fusion for Multimodal Recommendation}. In \bibinfo{booktitle}{\emph{Proceedings of the 48th International ACM SIGIR Conference on Research and Development in Information Retrieval}}. \bibinfo{pages}{1830--1839}.
\newblock


\bibitem[Xu et~al\mbox{.}(2025d)]%
        {xu2025best}
\bibfield{author}{\bibinfo{person}{Jinfeng Xu}, \bibinfo{person}{Zheyu Chen}, \bibinfo{person}{Shuo Yang}, \bibinfo{person}{Jinze Li}, {and} \bibinfo{person}{Edith~CH Ngai}.} \bibinfo{year}{2025}\natexlab{d}.
\newblock \showarticletitle{The Best is Yet to Come: Graph Convolution in the Testing Phase for Multimodal Recommendation}. In \bibinfo{booktitle}{\emph{Proceedings of the 33rd ACM International Conference on Multimedia}}. \bibinfo{pages}{6325--6334}.
\newblock


\bibitem[Xu et~al\mbox{.}(2025e)]%
        {xu2025vi}
\bibfield{author}{\bibinfo{person}{Jinfeng Xu}, \bibinfo{person}{Zheyu Chen}, \bibinfo{person}{Shuo Yang}, \bibinfo{person}{Jinze Li}, \bibinfo{person}{Zitong Wan}, \bibinfo{person}{Hewei Wang}, \bibinfo{person}{Weijie Liu}, \bibinfo{person}{Yijie Li}, {and} \bibinfo{person}{Edith~CH Ngai}.} \bibinfo{year}{2025}\natexlab{e}.
\newblock \showarticletitle{VI-MMRec: Similarity-Aware Training Cost-free Virtual User-Item Interactions for Multimodal Recommendation}.
\newblock \bibinfo{journal}{\emph{arXiv preprint arXiv:2512.08702}} (\bibinfo{year}{2025}).
\newblock


\bibitem[Xu et~al\mbox{.}(2025f)]%
        {xu2024mentor}
\bibfield{author}{\bibinfo{person}{Jinfeng Xu}, \bibinfo{person}{Zheyu Chen}, \bibinfo{person}{Shuo Yang}, \bibinfo{person}{Jinze Li}, \bibinfo{person}{Hewei Wang}, {and} \bibinfo{person}{Edith~CH Ngai}.} \bibinfo{year}{2025}\natexlab{f}.
\newblock \showarticletitle{Mentor: multi-level self-supervised learning for multimodal recommendation}. In \bibinfo{booktitle}{\emph{Proceedings of the AAAI Conference on Artificial Intelligence}}, Vol.~\bibinfo{volume}{39}. \bibinfo{pages}{12908--12917}.
\newblock


\bibitem[Xu et~al\mbox{.}(2026b)]%
        {xu2026well}
\bibfield{author}{\bibinfo{person}{Jinfeng Xu}, \bibinfo{person}{Zheyu Chen}, \bibinfo{person}{Shuo Yang}, \bibinfo{person}{Jinze Li}, \bibinfo{person}{Hewei Wang}, \bibinfo{person}{Jianheng Tang}, \bibinfo{person}{Wei Wang}, \bibinfo{person}{Xiping Hu}, {and} \bibinfo{person}{Edith~CH Ngai}.} \bibinfo{year}{2026}\natexlab{b}.
\newblock \showarticletitle{Well Begun is Half Done: Training-Free and Model-Agnostic Semantically Guaranteed User Representation Initialization for Multimodal Recommendation}. In \bibinfo{booktitle}{\emph{Proceedings of the 49th International ACM SIGIR Conference on Research and Development in Information Retrieval}}. \bibinfo{pages}{2096--2106}.
\newblock


\bibitem[Xu et~al\mbox{.}(2025g)]%
        {xu2025nlgcl}
\bibfield{author}{\bibinfo{person}{Jinfeng Xu}, \bibinfo{person}{Zheyu Chen}, \bibinfo{person}{Shuo Yang}, \bibinfo{person}{Jinze Li}, \bibinfo{person}{Hewei Wang}, \bibinfo{person}{Wei Wang}, \bibinfo{person}{Xiping Hu}, {and} \bibinfo{person}{Edith Ngai}.} \bibinfo{year}{2025}\natexlab{g}.
\newblock \showarticletitle{NLGCL: Naturally Existing Neighbor Layers Graph Contrastive Learning for Recommendation}. In \bibinfo{booktitle}{\emph{Proceedings of the Nineteenth ACM Conference on Recommender Systems}}. \bibinfo{pages}{319--329}.
\newblock


\bibitem[Xu et~al\mbox{.}(2026c)]%
        {xu2026nlgclp}
\bibfield{author}{\bibinfo{person}{Jinfeng Xu}, \bibinfo{person}{Zheyu Chen}, \bibinfo{person}{Shuo Yang}, \bibinfo{person}{Jinze Li}, \bibinfo{person}{Hewei Wang}, \bibinfo{person}{Wei Wang}, \bibinfo{person}{Xiping Hu}, {and} \bibinfo{person}{Edith Ngai}.} \bibinfo{year}{2026}\natexlab{c}.
\newblock \showarticletitle{NLGCL+: Naturally Existing Neighbour Layers Graph Contrastive Learning with Adaptive Sample Weighting for Multimodal Recommendation}.
\newblock \bibinfo{journal}{\emph{ACM Transactions on Recommender Systems}} (\bibinfo{year}{2026}).
\newblock


\bibitem[Xu et~al\mbox{.}(2026a)]%
        {xu2025survey}
\bibfield{author}{\bibinfo{person}{Jinfeng Xu}, \bibinfo{person}{Zheyu Chen}, \bibinfo{person}{Shuo Yang}, \bibinfo{person}{Jinze Li}, \bibinfo{person}{Wei Wang}, \bibinfo{person}{Xiping Hu}, \bibinfo{person}{Steven Hoi}, {and} \bibinfo{person}{Edith Ngai}.} \bibinfo{year}{2026}\natexlab{a}.
\newblock \showarticletitle{A survey on multimodal recommender systems: Recent advances and future directions}.
\newblock \bibinfo{journal}{\emph{IEEE Transactions on Multimedia}} (\bibinfo{year}{2026}).
\newblock


\bibitem[Yu et~al\mbox{.}(2022)]%
        {yu2022graph}
\bibfield{author}{\bibinfo{person}{Junliang Yu}, \bibinfo{person}{Hongzhi Yin}, \bibinfo{person}{Xin Xia}, \bibinfo{person}{Tong Chen}, \bibinfo{person}{Lizhen Cui}, {and} \bibinfo{person}{Quoc Viet~Hung Nguyen}.} \bibinfo{year}{2022}\natexlab{}.
\newblock \showarticletitle{Are graph augmentations necessary? simple graph contrastive learning for recommendation}. In \bibinfo{booktitle}{\emph{Proceedings of the 45th international ACM SIGIR conference on research and development in information retrieval}}. \bibinfo{pages}{1294--1303}.
\newblock


\bibitem[Zhang et~al\mbox{.}(2021)]%
        {zhang2021mining}
\bibfield{author}{\bibinfo{person}{Jinghao Zhang}, \bibinfo{person}{Yanqiao Zhu}, \bibinfo{person}{Qiang Liu}, \bibinfo{person}{Shu Wu}, \bibinfo{person}{Shuhui Wang}, {and} \bibinfo{person}{Liang Wang}.} \bibinfo{year}{2021}\natexlab{}.
\newblock \showarticletitle{Mining latent structures for multimedia recommendation}. In \bibinfo{booktitle}{\emph{Proceedings of the 29th ACM International Conference on Multimedia}}. \bibinfo{pages}{3872--3880}.
\newblock


\bibitem[Zhou et~al\mbox{.}(2023c)]%
        {zhou2023comprehensive}
\bibfield{author}{\bibinfo{person}{Hongyu Zhou}, \bibinfo{person}{Xin Zhou}, \bibinfo{person}{Zhiwei Zeng}, \bibinfo{person}{Lingzi Zhang}, {and} \bibinfo{person}{Zhiqi Shen}.} \bibinfo{year}{2023}\natexlab{c}.
\newblock \showarticletitle{A comprehensive survey on multimodal recommender systems: Taxonomy, evaluation, and future directions}.
\newblock \bibinfo{journal}{\emph{arXiv preprint arXiv:2302.04473}} (\bibinfo{year}{2023}).
\newblock


\bibitem[Zhou(2023)]%
        {zhou2023mmrecsm}
\bibfield{author}{\bibinfo{person}{Xin Zhou}.} \bibinfo{year}{2023}\natexlab{}.
\newblock \showarticletitle{MMRec: Simplifying Multimodal Recommendation}.
\newblock \bibinfo{journal}{\emph{arXiv preprint arXiv:2302.03497}} (\bibinfo{year}{2023}).
\newblock


\bibitem[Zhou et~al\mbox{.}(2023a)]%
        {zhou2023layer}
\bibfield{author}{\bibinfo{person}{Xin Zhou}, \bibinfo{person}{Donghui Lin}, \bibinfo{person}{Yong Liu}, {and} \bibinfo{person}{Chunyan Miao}.} \bibinfo{year}{2023}\natexlab{a}.
\newblock \showarticletitle{Layer-refined graph convolutional networks for recommendation}. In \bibinfo{booktitle}{\emph{2023 IEEE 39th International Conference on Data Engineering (ICDE)}}. IEEE, \bibinfo{pages}{1247--1259}.
\newblock


\bibitem[Zhou and Shen(2023)]%
        {zhou2023tale}
\bibfield{author}{\bibinfo{person}{Xin Zhou} {and} \bibinfo{person}{Zhiqi Shen}.} \bibinfo{year}{2023}\natexlab{}.
\newblock \showarticletitle{A tale of two graphs: Freezing and denoising graph structures for multimodal recommendation}. In \bibinfo{booktitle}{\emph{Proceedings of the 31st ACM International Conference on Multimedia}}. \bibinfo{pages}{935--943}.
\newblock


\bibitem[Zhou et~al\mbox{.}(2023b)]%
        {zhou2023bootstrap}
\bibfield{author}{\bibinfo{person}{Xin Zhou}, \bibinfo{person}{Hongyu Zhou}, \bibinfo{person}{Yong Liu}, \bibinfo{person}{Zhiwei Zeng}, \bibinfo{person}{Chunyan Miao}, \bibinfo{person}{Pengwei Wang}, \bibinfo{person}{Yuan You}, {and} \bibinfo{person}{Feijun Jiang}.} \bibinfo{year}{2023}\natexlab{b}.
\newblock \showarticletitle{Bootstrap latent representations for multi-modal recommendation}. In \bibinfo{booktitle}{\emph{Proceedings of the ACM Web Conference 2023}}. \bibinfo{pages}{845--854}.
\newblock


\end{thebibliography}


\newpage

\appendix
 
\section{Proof of Theorem~\ref{thm:ncer} (NCER Noise Reduction)}
\label{app:proof_ncer}
 
\begin{proof}
Decompose the observed graph as $\bm{S} = \bm{S}^* + \bm{N}$. The NCER reweighting applies element-wise multiplication with $\bm{W}$ where $W_{ij} = 1 + \kappa \cdot c_{ij}$:
\begin{align}
\hat{\bm{S}} &= \bm{S} \circ \bm{W} = (\bm{S}^* + \bm{N}) \circ \bm{W} \\
&= \bm{S}^* \circ \bm{W} + \bm{N} \circ \bm{W} = \hat{\bm{S}}^* + \hat{\bm{N}}.
\end{align}
 
Consider the Frobenius norms. For any matrix $\bm{M}$ with element-wise reweighting:
\begin{equation}
\|\bm{M} \circ \bm{W}\|_F^2 = \sum_{(i,j): M_{ij} \neq 0} M_{ij}^2 (1 + \kappa c_{ij})^2.
\end{equation}
 
For the noise component, define the average consistency on noise edges:
\begin{equation}
\bar{c}_N = \frac{\sum_{(i,j) \in \text{supp}(\bm{N})} N_{ij}^2 c_{ij}}{\sum_{(i,j) \in \text{supp}(\bm{N})} N_{ij}^2}.
\end{equation}
 
By the Cauchy-Schwarz inequality applied to the weighted sum:
\begin{equation}
\|\hat{\bm{N}}\|_F^2 \leq \|\bm{N}\|_F^2 \cdot (1 + \kappa \bar{c}_N)^2 + O(\kappa^2 \text{Var}(c_N)),
\end{equation}
where the second-order term accounts for variance in consistency scores. Similarly for the signal:
\begin{equation}
\|\hat{\bm{S}}^*\|_F^2 \geq \|\bm{S}^*\|_F^2 \cdot (1 + \kappa \bar{c}_{S^*})^2.
\end{equation}
 
Therefore, the noise-to-signal ratio satisfies:
\begin{equation}
\frac{\|\hat{\bm{N}}\|_F}{\|\hat{\bm{S}}^*\|_F} \leq \frac{\|\bm{N}\|_F}{\|\bm{S}^*\|_F} \cdot \frac{1 + \kappa \bar{c}_N}{1 + \kappa \bar{c}_{S^*}} + O(\kappa^2).
\end{equation}
 
Under the community structure assumption (Assumption~\ref{ass:noisy_graph}), clean edges connect items within the same semantic community and thus share many neighbors ($\bar{c}_{S^*}$ is large), while noise edges connect items across communities with few shared neighbors ($\bar{c}_N$ is small). Therefore $\bar{c}_{S^*} > \bar{c}_N$, and the multiplicative factor $(1 + \kappa \bar{c}_N)/(1 + \kappa \bar{c}_{S^*}) < 1$ for all $\kappa > 0$.
 
Extending from $\bm{S}^*$ to $\bm{S}$ in the denominator introduces an additional multiplicative correction bounded by $(1 + \epsilon/\|\bm{S}^*\|_F)^{-1}$, yielding the stated bound.
\end{proof}

\section{Proof of Theorem~\ref{thm:rig} (RIG Non-Degeneracy)}
\label{app:proof_rig}
 
\begin{proof}
The BPR loss for a triplet $(u, p, n)$ is:
\begin{equation}
\cL_{u,p,n} = -\log \sigma(\be_u^\top \be_p - \be_u^\top \be_n).
\end{equation}
 
With RIG, $\be_i = \be^{\text{UI}}_i + g_i \cdot \bh_i$. Taking the derivative with respect to $g_p$ (gate of positive item):
\begin{align}
\frac{\partial \cL_{u,p,n}}{\partial g_p} &= -\sigma(-\Delta_{u,p,n}) \cdot \be_u^\top \frac{\partial \be_p}{\partial g_p} \\
&= -\sigma(-\Delta_{u,p,n}) \cdot \be_u^\top \bh_p,
\end{align}
where $\Delta_{u,p,n} = \be_u^\top \be_p - \be_u^\top \be_n$ and $\sigma(-\Delta) = 1 - \sigma(\Delta) \in (0,1)$.
 
Define $w_{u,p} = \sigma(-\Delta_{u,p,n}) > 0$. Aggregating over all triplets where $i$ appears as positive:
\begin{equation}
\frac{\partial \cL}{\partial g_i}\bigg|_{\text{positive}} = -\sum_{u: (u,i,\cdot) \in \cD} w_{u,i} \cdot \be_u^\top \bh_i.
\end{equation}
 
Similarly, when $i$ appears as a \emph{negative} item in triplet $(u, p, i)$:
\begin{align}
\frac{\partial \cL_{u,p,i}}{\partial g_i} &= -\sigma(-\Delta_{u,p,i}) \cdot \be_u^\top \left(-\frac{\partial \be_i}{\partial g_i}\right) \\
&= \sigma(-\Delta_{u,p,i}) \cdot \be_u^\top \bh_i.
\end{align}
 
Define $w'_{u,i} = \sigma(-\Delta_{u,p,i}) > 0$. Aggregating:
\begin{equation}
\frac{\partial \cL}{\partial g_i}\bigg|_{\text{negative}} = \sum_{u: (u,\cdot,i) \in \cD} w'_{u,i} \cdot \be_u^\top \bh_i.
\end{equation}
 
At a stationary point, $\partial \cL / \partial g_i = 0$, which requires:
\begin{equation}
\sum_{u \in \text{pos}} w_{u,i} \cdot \be_u^\top \bh_i = \sum_{u \in \text{neg}} w'_{u,i} \cdot \be_u^\top \bh_i.
\end{equation}
 
\textbf{Case 1: $\be_u^\top \bh_i > 0$ on average (helpful II signal).} The positive-side gradient pushes $g_i$ upward (increasing $g_i$ reduces $\cL$), while the negative-side gradient pushes $g_i$ upward as well (increasing $g_i$ increases the negative item's score, which increases $\cL$). The balance is achieved at an interior point because the BPR weights $w, w'$ depend on the margin: as $g_i$ increases, the margins change, adjusting the weights until equilibrium.
 
\textbf{Case 2: $\be_u^\top \bh_i < 0$ on average (harmful II signal).} The roles reverse, and the equilibrium $g_i$ is pushed toward lower values.
 
Since the sigmoid output of the gate MLP is continuous and strictly bounded in $(0,1)$ (assuming bounded inputs and non-degenerate initialization), gradient descent converges to $g_i^* \in (0,1)$ that balances the positive and negative contributions. Furthermore, if there exist items $i, j$ with $\mathbb{E}[\be_u^\top \bh_i] > 0 > \mathbb{E}[\be_u^\top \bh_j]$, the equilibrium conditions require $g_i^* > g_j^*$.
\end{proof}

\section{Proof of Theorem~\ref{thm:ina} (INA Generalization Bound)}
\label{app:proof_ina}
 
\begin{proof}
The standard PAC-Bayes bound for BPR with $n$ observed interactions states:
\begin{equation}
R(f) \leq \hat{R}_n(f) + \sqrt{\frac{\text{KL}(f \| \pi) + \log(2\sqrt{n}/\delta)}{2n}},
\end{equation}
with probability at least $1-\delta$ over the draw of the training set.
 
INA augments the training set with $n_{\text{aug}} = n \cdot 1$ additional soft-positive triplets (one per observed interaction). Each augmented triplet $(u, j, n')$ satisfies $j \in \cN_{\text{II}}(p)$ where $(u, p)$ is a true interaction.
 
The key question is the \emph{effective sample size} contributed by augmented samples. Define the \emph{soft label quality}:
\begin{equation}
q = \mathbb{P}[\text{user} u \text{prefers} j \text{ over random } n' | j \in \cN_{\text{II}}(p), (u,p) \in \bm{R}^+].
\end{equation}
 
This is the probability that a neighbor of a positive item is itself a relevant item. The discount factor $\delta$ in INA accounts for this uncertainty. Following the analysis of learning with noisy labels~\cite{natarajan2013learning}, each augmented sample contributes an effective fraction $\delta_{\text{eff}} = \delta^2 \cdot q$ to the sample size. This arises because the discount $\delta$ reduces both the gradient magnitude (linear in $\delta$) and the effective signal strength (linear in $\delta \cdot q$).
 
Substituting the effective sample size $n + \delta_{\text{eff}} \cdot n_{\text{aug}}$ into the PAC-Bayes bound:
\begin{equation}
R(f) \leq \hat{R}_{n+n_{\text{aug}}}(f) + \sqrt{\frac{\text{KL}(f \| \pi) + \log(2\sqrt{n + \delta_{\text{eff}} n_{\text{aug}}}/\delta)}{2(n + \delta_{\text{eff}} n_{\text{aug}})}}.
\end{equation}
 
When the II graph has positive predictive value ($q > 0$) and the discount is positive ($\delta > 0$), we have $\delta_{\text{eff}} > 0$ and thus $n + \delta_{\text{eff}} n_{\text{aug}} > n$, yielding a strictly tighter bound than standard BPR.
\end{proof}

\section{Baselines}
\label{appendix:baselines}
We compare IIMRec against a diverse set of representative baselines, categorized into two groups.

\noindent1) General recommendation models:
\begin{itemize}[leftmargin=*]
\item \textbf{LightGCN} \cite{he2020lightgcn} simplifies graph convolutional network (GCN)-based collaborative filtering by eliminating redundant components such as feature transformation and nonlinear activation.
\item \textbf{SimGCL} \cite{yu2022graph} performs graph contrastive learning by injecting random noise directly into the representation space instead of augmenting the graph structure.
\item \textbf{LayerGCN} \cite{zhou2023layer} introduces residual connections into GCN layers to mitigate the over-smoothing problem in LightGCN.
\item \textbf{NLGCL} \cite{xu2025nlgcl} leverages naturally occurring high-quality contrastive signals on graphs to enhance recommendation accuracy.
\end{itemize}
\noindent2) Multimodal recommendation models:
\begin{itemize}[leftmargin=*]
\item \textbf{MMGCN} \citep{wei2019mmgcn} employs separate GCNs to process and integrate information from different modalities.
\item \textbf{DualGNN} \citep{wang2021dualgnn} constructs an auxiliary user-user graph to capture user preference correlations alongside multimodal item features.
\item \textbf{LATTICE} \citep{zhang2021mining} builds item-item semantic graphs from multimodal features to capture latent correlative signals among items.
\item \textbf{SLMRec} \citep{tao2022self} applies self-supervised learning to multimodal recommendation via feature-level noise perturbation and cross-modal pattern discovery.
\item \textbf{BM3} \citep{zhou2023bootstrap} simplifies self-supervised multimodal recommendation by generating contrastive views through dropout-based perturbation.
\item \textbf{MMSSL} \citep{wei2023multi} employs modality-aware adversarial perturbations for interactive structure learning and combines cross-modal contrastive learning to disentangle shared and modality-specific features.
\item \textbf{FREEDOM} \citep{zhou2023tale} denoises the user-item interaction graph and constructs a frozen item-item graph from raw modality features.
\item \textbf{LGMRec} \citep{guo2024lgmrec} utilizes hypergraph structures to capture both local topological information and global collaborative embeddings.
\item \textbf{DiffMM} \citep{jiang2024diffmm} integrates a modality-aware graph diffusion model with cross-modal contrastive learning to improve user representation learning.
\item \textbf{SMORE} \citep{ong2025spectrum} mitigates modality noise by leveraging discriminative spectral properties in the frequency domain.
\item \textbf{HPMRec} \citep{chen2025hypercomplex} enriches feature diversity and bridges semantic gaps across modalities through hypercomplex representations.
\item \textbf{MENTOR} \citep{xu2024mentor} designs multi-level cross-modal alignment tasks to enhance the quality of learned representations.
\item \textbf{COHESION} \citep{xu2025cohesion} introduces a dual-stage fusion mechanism to fully exploit composite graph structures.
\end{itemize}

\section{Algorithm}
\label{appendix:alg}
The complete training procedure is summarized in Algorithm~\ref{alg:iimrec}.
\begin{algorithm}[t]
\caption{IIMRec Training Procedure}
\label{alg:iimrec}
\begin{algorithmic}[1]
\REQUIRE Interaction matrix $\bm{R}$, multimodal features $\{\bm{F}^m\}$
\STATE \textbf{Preprocessing (one-time):}
\STATE Construct semantic graph $\bm{S}_{\text{sem}}$ (spatial + spectral KNN)
\STATE Construct co-occurrence graph $\bm{S}_{\text{co}}$ from $\bm{R}^\top \bm{R}$
\STATE Fuse: $\bm{S}_{\text{fused}} = \bm{S}_{\text{co}} + \bm{S}_{\text{sem}}$
\STATE Apply NCER reweighting (Eq.~\eqref{eq:ncer_reweight}) $\to \bm{S}_{\text{NCER}}$
\STATE Build content-guided UI graph $\tilde{\bm{A}}$
\STATE Pre-compute INA neighbor table from $\bm{S}_{\text{NCER}}$
\FOR{epoch $= 1$ to max\_epochs}
    \FOR{each mini-batch $(u, p, n) \in \cD$}
        \STATE Project modality features (Eq.~\eqref{eq:projection})
        \STATE Propagate on UI graph (Eq.~\eqref{eq:residual_prop})
        \STATE Fuse user modality representations (Eq.~\eqref{eq:user_fusion})
        \STATE Propagate item representations on II graph
        \STATE Apply RIG gating (Eq.~\eqref{eq:rig})
        \STATE Calculate contrastive learning losses (Eq.~\eqref{eq:cl_user})
        \STATE Compute $\cL_{\text{BPR}} + \lambda_{\text{INA}} \cL_{\text{INA}} + \lambda_{\text{CL}} \cL_{\text{CL}}$ 
        \STATE Update parameters via backpropagation
    \ENDFOR
\ENDFOR
\end{algorithmic}
\end{algorithm}

\begin{figure}[!t] 
    \centering
    \includegraphics[width=1\linewidth]{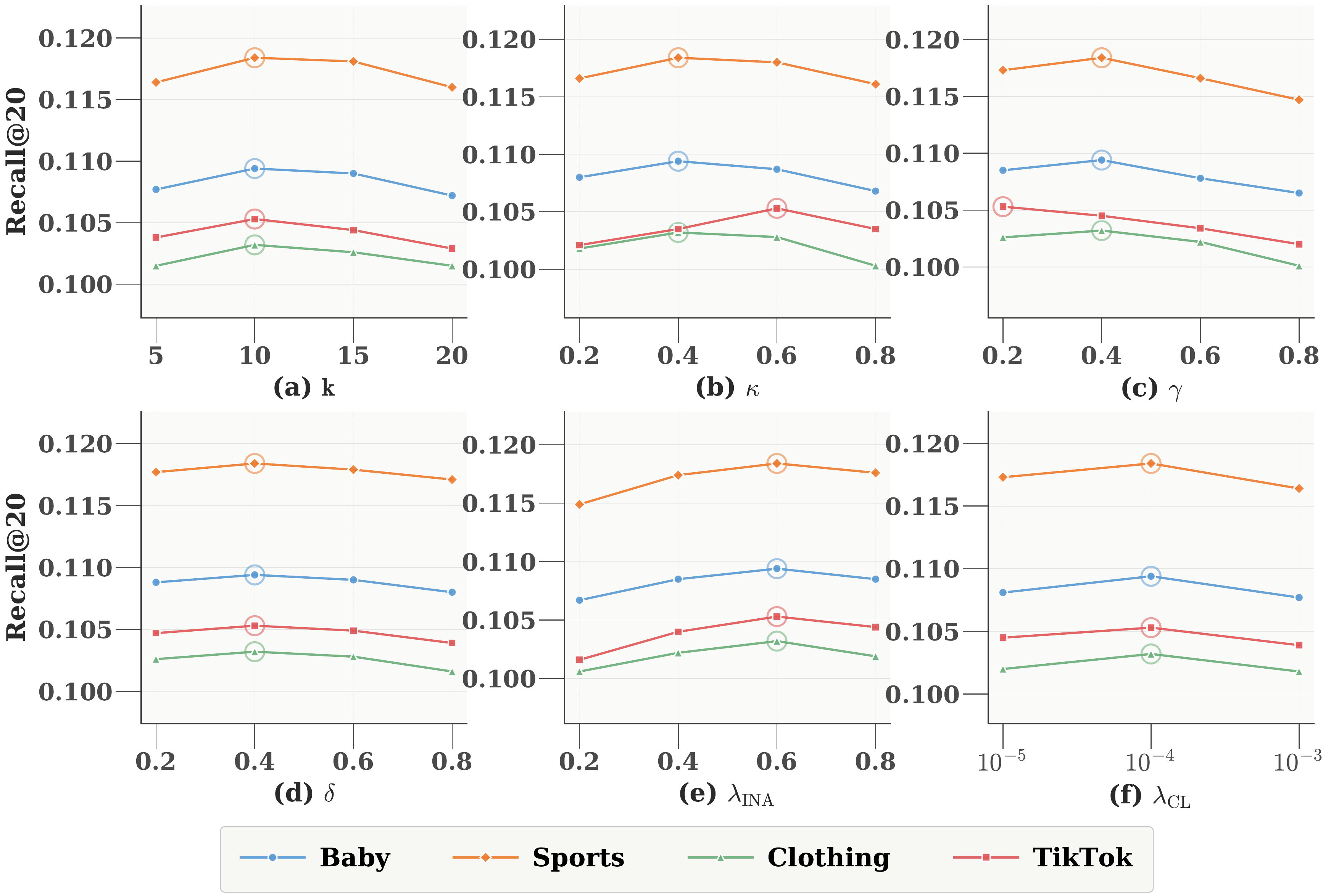}
    \vskip -0.15in
    \caption{Hyper-parameter analysis for IIMRec across all datasets in terms of Recall@20.}
    \label{fig:hyperparameter}
    \vskip -0.15in
\end{figure}

\subsection{Hyper-parameter Analysis}
\label{appendix:hyper}
We investigate the sensitivity of IIMRec to six key hyper-parameters by varying each while fixing the others at their optimal values, and report Recall@20 across all four datasets in Figure~\ref{fig:hyperparameter}.
 
\noindent \textbf{KNN neighbor count $k$.} Performance peaks at $k{=}10$ across all datasets and degrades at both extremes: too few neighbors limit the semantic coverage of the item-item graph, while too many introduce noisy edges that dilute the propagated signal.
 
\noindent \textbf{NCER reweighting strength $\kappa$.} A moderate $\kappa{=}0.4$ yields the best results on three datasets, with TikTok favoring $\kappa{=}0.6$. Larger values over-amplify high-overlap edges at the expense of useful but structurally weaker connections.
 
\noindent \textbf{Content-guided UI balance $\gamma$.} All datasets favor $\gamma \in \{0.2, 0.4\}$, indicating that a modest injection of virtual interactions benefits collaborative propagation, but excessive reliance on the content-guided graph introduces noise.
 
\noindent \textbf{INA discount factor $\delta$.} Performance is relatively stable across $\delta \in [0.2, 0.6]$ and declines at $\delta{=}0.8$, confirming that soft positives should be discounted to avoid over-weighting uncertain supervision.
 
\noindent \textbf{INA loss weight $\lambda_{\text{INA}}$.} The optimal value is $\lambda_{\text{INA}}{=}0.6$ across all datasets. Too small a weight under-utilizes the augmented signal, while too large a weight lets soft-positive gradients dominate the primary BPR objective.
 
\noindent \textbf{Contrastive loss weight $\lambda_{\text{CL}}$.} $\lambda_{\text{CL}}{=}10^{-4}$ is consistently optimal. A larger weight ($10^{-3}$) forces excessive cross-modal alignment that conflicts with the collaborative filtering signal.

\end{document}